\newif\ifconf
\newcommand{\lref}[2][]{{#1~\ref{#2}}}
\definecolor{Darkblue}{rgb}{0,0,0.4}
\definecolor{Brown}{cmyk}{0,0.81,1.,0.60}
\definecolor{Purple}{cmyk}{0.45,0.86,0,0}
\newcommand{\mydriver}{hypertex}
 \renewcommand{\mydriver}{pdftex}
\newcommand{\lref}[2][]{\hyperref[#2]{#1~\ref*{#2}}}
\newtheorem{theorem}{Theorem}[section]
\newtheorem{lemma}[theorem]{Lemma}
\newenvironment{proof}{

\noindent{\bf Proof:}}
{\hfill$\blacksquare$

}
\newcommand{\junk}[1]{}
\newcommand{\ignore}[1]{}
\newcommand{\Z}[0]{{\ensuremath{\mathbb{Z}}}}
\def\floor#1{\lfloor #1 \rfloor}
\def\ceil#1{\lceil #1 \rceil}
\newcommand{\sse}{\subseteq}
\newcommand{\e}{\varepsilon}
\newcommand{\eps}{\varepsilon}
\newcounter{note}[section]
\newcommand{\qedsymb}{\hfill{\rule{2mm}{2mm}}}
\renewenvironment{proof}{\begin{trivlist} \item[\hspace{\labelsep}{\bf
\noindent Proof.\/}] }{\qedsymb\end{trivlist}}%
\newcommand{\initOneLiners}{%
    \setlength{\itemsep}{0pt}
    \setlength{\parsep }{0pt}
    \setlength{\topsep }{0pt}
}
\newenvironment{OneLiners}[1][\ensuremath{\bullet}]
    {\begin{list}
        {#1}
        {\initOneLiners}}
    {\end{list}}
\DeclareMathOperator*{\mymod}{mod}
\newcommand{\diam}{\mathsf{diam}}
\newcommand{\dAP}{$d$-{\sc dimAP}\xspace}
\newcommand{\dAPplus}{$d$-{\sc dimAP}+\xspace}
\newcommand{\twoAP}{$2$-{\sc dimAP}\xspace}
\newcommand{\twoAPplus}{$2$-{\sc dimAP}+\xspace}
\newcommand{\etal}{\emph{et al.}}
\title{Minimum $d$-dimensional arrangement with fixed points}
\author{
Anupam Gupta
\thanks{Computer Science Department, Carnegie Mellon
    University, Pittsburgh, PA 15213, and Microsoft Research SVC,
    Mountain View, CA 94043. Supported in part by
    NSF awards CCF-0964474 and CCF-1016799, US-Israel BSF grant
    \#2010426, and by the
    CMU-MSR Center for Computational Thinking.}
\and
Anastasios Sidiropoulos
\thanks{University of Illinois at Urbana-Champaign; sidiropo@gmail.com; http://sidiropoulos.org.}
}
\begin{document}

\maketitle
\setcounter{page}{0} 
\thispagestyle{empty}

\begin{abstract}
  In the Minimum $d$-Dimensional Arrangement Problem (\dAP) we are given
  a graph with edge weights, and the goal is to find a $1$-$1$ map of
  the vertices into $\mathbb{Z}^d$ (for some fixed dimension $d\geq 1$)
  minimizing the total weighted stretch of the edges. This problem
  arises in VLSI placement and chip design. 

  Motivated by these applications, we consider a generalization of \dAP,
  where the positions of some of the vertices (pins) is \emph{fixed} and
  specified as part of the input. We are asked to extend this partial
  map to a map of all the vertices, again minimizing the weighted
  stretch of edges. This generalization, which we refer to as \dAPplus,
  arises naturally in these application domains (since it can capture
  blocked-off parts of the board, or the requirement of power-carrying
  pins to be in certain locations, etc.).  Perhaps surprisingly, very
  little is known about this problem from an approximation viewpoint.

  For dimension $d=2$, we obtain an $O(k^{1/2} \cdot \log
  n)$-approximation algorithm, based on a strengthening of the
  \emph{spreading-metric} LP for \twoAP.  The integrality gap for this
  LP is shown to be $\Omega(k^{1/4})$.  We also show that it is NP-hard
  to approximate \twoAPplus~within a factor better than
  $\Omega(k^{1/4-\eps})$.  We also consider a (conceptually harder, but
  practically even more interesting) variant of \twoAPplus, where the
  target space is the grid $\mathbb{Z}_{\sqrt{n}} \times
  \mathbb{Z}_{\sqrt{n}}$, instead of the entire integer lattice
  $\mathbb{Z}^2$.  For this problem, we obtain a $O(k \cdot
  \log^2{n})$-approximation using the same LP relaxation.  We complement
  this upper bound by showing an integrality gap of $\Omega(k^{1/2})$,
  and an $\Omega(k^{1/2-\eps})$-inapproximability result.

  Our results naturally extend to the case of arbitrary fixed target
  dimension $d\geq 1$.
\end{abstract}

\newpage

\section{Introduction}

The Minimum $d$-Dimensional Arrangement Problem (\dAP) is a classical graph embedding, with connections to VLSI layout, multi-processor placement, graph drawing, and visualization (see \cite{Hansen89,LR99,HeldKRV11}).
An input to \dAP~is a graph $G=(E,V)$, with edge weights $\{w_e\}_{e\in E}$, and the goal is to find an injection $f:V\to \{1,\ldots,r\}^d$, for some $d>0$, minimizing the total weighted stretch, i.e.,
\[
\sum_{\{u,v\}\in E} w_{uv} \cdot \|f(u)-f(v)\|_2.
\]
In order to simplify the exposition, we will assume that $n=r^d$ (see \cite{RV} for a detailed discussion).

We consider a natural generalization of \dAP, where the desired mapping $f$ must satisfy certain placement constraints.
More specifically, the positions of some vertices are \emph{fixed}, i.e.,~we are given a set of $k$ \emph{terminals} $T\subset V$, and a mapping $f_0:T \to \mathbb{Z}^d$, and we seek some $f$ that \emph{extends} $f_0$ to $V$, i.e.,~such that $f|_T = f_0$. We refer to this more general problem as \dAPplus.

The above definition models application scenarios where we have some
partial information about the desired embedding.  For example, in the
context of VLSI layout, the locations of some parts of the circuit could
be fixed due to external constraints, e.g., to interface with other
chips, or with the power input; see,~\cite[Section~2.2]{HeldKRV11} for
one ``simplified'' formalization which is even more general than
\dAPplus. However, to the best of our knowledge, the approximability of
this problem has never been studied before --- prior to this work,
nothing was known even for the case of $k=2$ terminals. (The \dAPplus
problem was explicitly posed as an open problem by Jens Vygen.)
Moreover, the setting of fixed pins also captures the setting of
blockages where parts of the grid are forbidden, since we can just
create a set of disjoint vertices that are mapped to those points.

\subsection{Previous work on \dAP}
Hansen \cite{Hansen89}, using the balanced separator result of Leighton and Rao \cite{LR99}, obtained the first $O(\log^2 n)$-approximation algorithm for \dAP.
Combined with the improved separator algorithm of Arora, Rao and Vazirani \cite{ARV}, the algorithm from \cite{Hansen89} yields a $O(\log^{3/2} n)$-approximation.
Subsequently, a $O(\log n \cdot \log \log n)$-approximation was obtained by Even \etal~\cite{ENRS}.
Recently, Rotter and Vygen \cite{RV} have combined the algorithm from \cite{ENRS} with the tree-embedding result of Fakcharoenphol, Rao and Talwar \cite{FRT} to obtain a $O(\log n)$-approximation.
This last algorithm gives the currently best-known approximation guarantee for \dAP.
We remark that Charikar, Makarychev and Makarychev \cite{CMM} have claimed a $O(\sqrt{\log n})$-approximation for \dAP, but a flaw in their argument was subsequently pointed out in \cite{RV}.
The case $d=1$ is also known as Minimum Linear Arrangement (MLA).
We note that better algorithms are known for MLA \cite{CMM,FL,RR}.

There is much practical interest in graph layout problems in VLSI design
with a vast literature. We merely refer to~\cite{HeldKRV11} for a
combinatorial optimization perspective on the problem, and the
references therein.

\subsection{The bounded vs.~the unbounded case}
The currently best-known algorithms for \dAP~give the same approximation guarantee regardless of whether the target space is the integer lattice $\mathbb{Z}^d$, or the $d$-dimensional mesh $\{1,\ldots,n^{1/d}\}^d$.
We refer to the former case as \emph{unbounded}, and to the latter as \emph{bounded} \dAP.
Similarly, one can define the bounded, and unbounded version of \dAPplus.
Even though for \dAP~both bounded, and unbounded variants seem comparable\footnote{We remark that all previous algorithms for approximating \dAP~compute an embedding into the $d$-dimensional grid $\{1,\ldots,n^{1/d}\}^d$. It can be shown by the techniques in \cite{RV,ENRS} that the cost of an optimal arrangement into the $d$-dimensional grid is no more than a $O(\log n)$ factor worse than the cost of an arrangement into the integer lattice.
We are not aware of any better bounds on the ratio of these two quantities.}
 from an approximation point of view, the situation is more intricate for \dAPplus.
More specifically, obtaining an approximation algorithm for bounded \dAPplus~appears to be a more difficult task.
We elaborate on our precise results in the following.

\subsection{Our results}
We consider a natural extension of the \emph{spreading metric} LP (see  \cite{ENRS}) to our problem.
For the case of bounded \dAPplus~we obtain a $O(k^{1/d} \cdot \log n)$-approximation via this LP, where $k$ is the number of terminals.
We also show that the integrality gap in this case is at least $\Omega(k^{1/(2d)})$.
Further, we show that bounded \dAPplus~is NP-hard to approximate within a factor better than $\Omega(k^{1/(2d)-\eps})$, for any fixed $\eps>0$.

These results can also be extended to the case of bounded \dAPplus.
More specifically, using the same LP relaxation, we give a $O(k \cdot \log^2 n)$-approximation for bounded \dAPplus.
The integrality gap for this case is $\Omega(k^{1/d})$, and we can show that the problem is NP-hard to approximate within a factor better than $\Omega(k^{1/d-\eps})$, for any $\eps>0$.

The following table summarizes our results.

\begin{center}
\begin{tabular}{|l||c|c|c|}
\hline
\textbf{Problem} & \textbf{Approximation factor} & \textbf{Integrality gap} & \textbf{Inapproximability}\\
\hline
\hline
Unbounded \dAPplus & $O\left(k^{1/d} \cdot \log n\right)$ & $\Omega\left(k^{1/(2d)}\right)$ & $\Omega\left(k^{1/(2d)-\eps}\right)$\\
\hline
Bounded \dAPplus & $O\left(k \cdot \log^2 n\right)$ & $\Omega\left(k^{1/d}\right)$ & $\Omega\left(k^{1/d-\eps}\right)$\\
\hline
\end{tabular}
\end{center}

Finally, we consider a relaxed version of bounded \dAPplus, where we are allowed to violate the boundaries of the target grid by some small amount.
More specifically, we show that for any fixed dimension $d\geq 1$, and for any $\eps>0$, there exists a polynomial-time $O((k/\eps)^{1/d} \log n)$-approximation algorithm for unbounded \dAPplus, such that the image of the embedding is contained inside the $d$-dimensional grid  $\{1,\ldots,(1+\eps) n^{1/d}\}^d$.

\subsection{Our techniques}
Our algorithm is based on the \emph{spreading metric} LP due to Even \etal~\cite{ENRS}.
A feasible solution to this relaxation is a metric $d$ on $V$.
Intuitively, for an integral solution, the metric $d$ corresponds to the Euclidean metric between the images of the vertices in $V$.
The metric $d$ satisfies certain \emph{spreading} constraints.
Such a constrain states that for a subset $S\subseteq V$, the average distance between points in $S$ is $\Omega(|S|^{1/d})$ (see the next Section for a more precise definition), which holds for any integer solution due to isoperimetric properties of the $d$-dimensional lattice.
The algorithm of \cite{RV} begins by solving this spreading metric LP.
Given an optimal solution $d$, it embeds $d$ into a tree, and then obtains a bijection $b:V\to\{1,\ldots,n\}$ by an appropriate traversal of the tree.
The resulting ordering has small \emph{$d$-dimensional cost}, i.e.,~it satisfies that $\sum_{\{u,v\}\in E} w_{uv} \cdot |b(u)-b(v)|^{1/d}$ is ``small'' compared to the cost of the LP.
One can then obtain an embedding into $\mathbb{Z}^d$ by mapping $V$ along an appropriate iteration of a space-filling curve, in the order given by $b$.

We consider the natural extension of this spreading metric LP in our setting.
It turns out that we can use this LP for both the bounded, and unbounded variants of \dAPplus, yet with different analysis, and guarantees.

Consider now a metric $d$, which is an optimal fractional solution to our LP.
The first obstacle that we face when rounding $d$ is that we cannot map the vertices along a space-filling curve, because such a curve would have to pass through the terminals at specific locations.
We resolve this issue by first using $d$ to \emph{cluster} $V$.
For every terminal $t_i\in T$, we form a cluster $C_i \subseteq V$, containing $t_i$, using the partitioning scheme due to C\v{a}linescu, Karloff and Rabani \cite{CKR}.
The analysis from \cite{CKR} allows us to delete the edges with endpoints in different clusters, and ``charge'' their cost to the remaining edges.

Given the above clustering, it is easy to compute an embedding $f_i$ for each cluster $C_i$ separately using the algorithm from \cite{RV,ENRS}.
The difficult part is how to merge the embeddings $f_1,\ldots,f_k$ of different clusters into a single embedding $f$ for the whole graph.

\paragraph{Merging the clusters in the unbounded case.}
Let us first describe the merging procedure for unbounded \dAP, which turns out to be simpler.
The main difficulty is that if we were to simply ``overlay'' the embeddings $f_1,\ldots,f_d$, different vertices might collide, i.e.,~they can end up being mapped to the same point.
The idea is that since the image of the embedding does not have to be contained inside a bounded area, we can scale every embedding $f_i$ by a factor of $\Theta(k^{1/d})$, increasing the total cost by a factor of $\Theta(k^{1/d})$.
After the scaling, there is enough space to \emph{interleave} the different embeddings, without introducing any collisions.
After the interleaving step, we can easily guarantee, via a small perturbation, that the terminals get mapped to their specified positions.

\paragraph{Merging the clusters in the bounded case.}
The above scaling trick is not applicable to the bounded case, and this
makes the bounded problem a lot more challenging.
The reason is that scaling an embedding $f_i$ can cause the images of some vertices to be mapped outside the grid.
We therefore have to resort to the following more elaborate approach:
First, we partition each cluster $C_i$ into \emph{sub-clusters} $C_{i,0},\ldots,C_{i,\ell}$, for some $\ell\leq \log n$, where each $C_{i,j}$ has size $\Theta(2^j)$, and $C_{i,0}$ contains only the $i$-th terminal $t_i$.
Next, we compute embeddings for the sub-clusters by considering them in increasing order (i.e.,~$C_{1,0},\ldots,C_{k,0},C_{1,1},\ldots$).
When considering a sub-cluster $C_{i,j}$, we can find a rectangle $R$ of constant aspect ratio containing the image of the $i$-th terminal $f_0(t_i)$, with $R$ containing $|C_{i,j}|$ unassigned points, and such that the ratio of unassigned vs assigned points is at least $\Omega(1/k)$.
The final step is to map the vertices in $C_{i,j}$ onto the unassigned points in $R$.
This would have been an easy talk if we had a space-filling curve covering the unassigned portion $R'$ of $R$.
Unfortunately, $R'$ could be very complicated (in fact, $R'$ does not need even be connected).
Despite this, we can find an ordering of the points in $R'$ that, intuitively, at all scales, a typical distance has the same behavior as in a space-filling curve.

\paragraph{Integrality gaps, and inapproximability.}
The main idea for obtaining both of our integrality gaps (both for
bounded, and for unbounded \dAPplus) is constructing a metric space that satisfies all spreading constraints, yet it does not admit a low-stretch embedding into $d$-dimensions:
We color all terminal points blue, and all non-terminal points red.
We place all points with the same color in one plane, and we embed the two planes into $\mathbb{R}^{d+1}$, so that they are parallel, and at distance $r>0$.
Then, we perturb the red points, so that each point moves by at most $O(r)$, and without violating the spreading constraints.
In any low-stretch embedding of the resulting perturbed space into $d$-dimensions, some red point has to collide with some blue point, which leads to a contradiction.
A modification of the above construction allows us to encode a certain type of packing constraints.
Subsequently, we can use these packing constraints to prove hardness of approximation via a reduction from the problem $3$Partition.

\subsection{Organization}
All of the main ideas in both our upper, and lower bounds, are captured in the case of dimension $d=2$.
In order to simplify the exposition, we first present the arguments for \twoAPplus, and subsequently discuss their generalizations to any  $d\geq 1$.
In Sections \ref{sec:algo_unbounded}, and \ref{sec:algo_bounded} we give the algorithms for unbounded, and bounded \twoAPplus~respectively.
In Appendix~\ref{sec:gaps} we present the integrality gaps, and finally in Appendix~\ref{sec:hardness} we prove the hardness of approximation.




\section{An algorithm for unbounded \dAPplus}
\label{sec:algo_unbounded}

We now present the algorithm for the unbounded \twoAPplus problem, and its generalization to \dAPplus, for any $d\geq 1$. As in the case
of algorithms for the minimum linear arrangement, \dAP, and other such problems, we will use a linear program
rounding-based algorithm.

\subsection{The linear program}
\label{sec:lp}

The natural spreading metric linear program is the following:
\begin{align}
  \min \sum_{uv \in E} & w_{uv} \cdot d(u,v) \tag{LP1} \label{eq:1} \\
  \sum_{v \in S} d(u,v) &\geq (|S| - 1)^{1+1/2}/4 \qquad\forall S \sse V, u
  \in S \label{eq:2} \\
  d(u,v) &= \| f_0(u) - f_0(v) \|_2 \qquad\forall u,v \in T \label{eq:3}
  \\
  d & ~~~\text{metric}.
\end{align}
It is well-known that this problem can be solved in polynomial time,
using the ellipsoid method, since the separation oracle for the
constraints~(\ref{eq:2}) can be implemented using a shortest-path
algorithm (see, e.g.,~\cite{LR99,ENRS,RV}). One issue is that the
distance $\| f_0(u) - f_0(v) \|_2$ may be irrational; let us assume that
in this case we replace the constraint~(\ref{eq:3}) by
\[ d(u,v) \in (1 \pm \varepsilon) \| f_0(u) - f_0(v) \|_2 \qquad\forall
u,v \in T, \] where $\varepsilon > 0$ is some tiny constant. This is
clearly a relaxation, and it can be verified that it does not change the
subsequent rounding arguments. Let $d$ be the optimal metric returned,
with LP value $Z^*$.

\subsection{Rounding}
\label{sec:rounding}

To round the linear program solution, we will use algorithms for the
$0$-Extension problem. These algorithms will partition the vertex set
into $k$ ``clusters'', each containing a single terminal. We then use an
existing algorithm of Rotter and Vygen~\cite{RV} to lay each of these
clusters out in the plane. Since the resulting map might not be injective,
we then use an interleaving technique to give us a legal map.

\subsubsection{Partitioning using $0$-Extension} 
\label{sec:partitioning}

\newcommand{\C}{\mathcal{C}}

First, we use a $0$-Extension algorithm to assign each point to a
terminal. In the $0$-Extension we are given a graph $G = (V,E)$ with the
edges having weights $w_e$, and a set of terminals $T \sse V$, and a
metric $d(\cdot, \cdot)$ on the terminal set. The goal is to find a
``retraction'' map $g: V \to T$ such that $g(t) = t$ for all terminals
$t \in T$, to minimize $\sum_{uv \in E} w_{uv}\, d(g(u), g(v))$. One
technique (rounding the semimetric relaxation) developed for this
problem can be abstracted out as follows:

\begin{theorem}[\cite{CKR}]
  \label{thm:ckr}
  Given a metric $(V,d)$ on the vertices of the graph $G = (V,E)$,
  and a subset $T \sse V$, there exists a
  randomized poly-time algorithm that outputs a map $g: V \to T$ such
  that 
  \begin{OneLiners}
  \item[(a)] $g(t) = t$ for all $t \in T$,
  \item[(b)] if $A_u^* := d(u, T)$ is the  $u$'s distance to its closest
    terminal, then $\Pr[ d(u, g(u)) \leq 2A_u^*] = 1$, and
  \item[(c)] The probability that $u$ and $v$ are mapped to different
    terminals is 
    \[ \Pr[ g(u) \neq g(v) ] \leq \alpha_{CKR} \cdot \frac{ d(u,v)
    }{(A_u^* + A_v^*)}. \]
  \end{OneLiners}
  where $\alpha_{CKR} = O(\log k)$.\footnote{The upper bound shown in
    the paper was technically $\alpha_{CKR} \cdot \frac{ d(u,v)
    }{\min(A_u^*,A_v^*)}$, but the two bounds are easily shown to be
    equivalent within constant factors.}
\end{theorem}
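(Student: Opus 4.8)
The plan is to reconstruct the random partitioning scheme of C\v{a}linescu, Karloff, and Rabani and to check that it delivers the three guarantees. The map $g$ would be produced from two independent random choices: a scale $\beta$ drawn uniformly from $[1,2)$, and a uniformly random permutation $\sigma$ of the $k$ terminals. Writing $A_u^* = d(u,T)$, I say that a terminal $t$ \emph{captures} $u$ if $d(u,t) \le \beta \cdot A_u^*$, and I set $g(u)$ to be the $\sigma$-earliest terminal that captures $u$. Since the terminal nearest to $u$ sits at distance exactly $A_u^* \le \beta A_u^*$, every vertex is captured by someone, so $g$ is well defined.

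Properties (a) and (b) then come for free. For a terminal $t$ we have $A_t^* = 0$, so the capture condition $d(t,t') \le \beta\cdot 0 = 0$ is met only by $t$ itself (the terminals lie at distinct points), forcing $g(t)=t$; this is (a). For (b), the construction gives $d(u,g(u)) \le \beta A_u^* < 2A_u^*$ with probability $1$, independent of the random choices.

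The real content is the separation bound (c), and this is where I expect essentially all of the work to lie. First I would establish a clean combinatorial picture: fix $\beta$ and $\sigma$, and let $t^\star$ be the $\sigma$-first terminal capturing at least one of $u,v$. Then $g(u) \neq g(v)$ exactly when $t^\star$ captures precisely one of the two --- if it captures both it is simultaneously the first capturer of each and they agree, while if it captures only $u$ then $g(u)=t^\star$ but $g(v)$ is some strictly later terminal. Conditioning on $\beta$, let $S_\beta$ be the terminals capturing at least one of $u,v$ and $D_\beta \subseteq S_\beta$ those capturing exactly one; since $\sigma$ is uniform, the first element of $S_\beta$ falls in $D_\beta$ with probability $|D_\beta|/|S_\beta|$, so that $\Pr[g(u)\neq g(v)] = \int_1^2 (|D_\beta|/|S_\beta|)\,d\beta$.

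The hard part will be bounding this expression by $\alpha_{CKR}\cdot d(u,v)/(A_u^*+A_v^*)$ with $\alpha_{CKR}=O(\log k)$, and in particular composing the two sources of randomness correctly across the two different scales $A_u^*$ and $A_v^*$. I would rewrite the integral as a union bound over terminals: only terminals whose capture threshold is below $2$ (equivalently, those within $2A_u^*$ of $u$ or within $2A_v^*$ of $v$) can ever separate, so the sum is effectively over nearby terminals. A fixed terminal $t$ disagrees on $u$ and $v$ only for $\beta$ in an interval of length $|d(u,t)/A_u^* - d(v,t)/A_v^*|$; for the relevant nearby terminals the triangle inequality controls this normalized gap by $O(d(u,v)/\min(A_u^*,A_v^*))$, and the uniform density of $\beta$ turns it into the distance factor. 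The logarithmic factor is supplied by the permutation: ranking the candidate terminals by increasing capture threshold, the $i$-th one can be the $\sigma$-first active terminal only if it precedes the $i-1$ closer ones, an event of probability at most $1/i$, and $\sum_{i\le k} 1/i = O(\log k)$. Combining the per-terminal interval length with this harmonic bound yields $O(\log k)\cdot d(u,v)/\min(A_u^*,A_v^*)$, which matches the stated form with denominator $A_u^*+A_v^*$ as noted in the footnote (the two differ only when one of the $A^*$ is vastly larger, a regime in which $d(u,v)$ is already comparable to the larger scale and the bound is trivially true). The delicate bookkeeping is precisely this alignment of the per-point radii, so that the separating annulus, the density of $\beta$, and the harmonic permutation estimate multiply out to the claimed product.
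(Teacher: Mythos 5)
Your reconstruction is correct, and it is precisely the argument of the cited source: the paper itself offers no proof of Theorem~\ref{thm:ckr} (it imports the result from \cite{CKR} as a black box), and your random-$\beta$/random-permutation scheme, the characterization of separation via the $\sigma$-first capturing terminal, the per-terminal $\beta$-interval of length $\left| d(u,t)/A_u^* - d(v,t)/A_v^*\right|$ controlled by the triangle inequality over terminals within $O(A^*)$ of the pair, and the harmonic $\sum_i 1/i = O(\log k)$ permutation bound are exactly the CKR analysis. Your remark resolving the footnote's discrepancy between the $\min(A_u^*,A_v^*)$ and $A_u^*+A_v^*$ denominators (when the two scales differ by more than a constant factor, $d(u,v) \geq |A_u^*-A_v^*|$ makes the stated bound trivial) is also the intended justification.
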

We apply Theorem~\ref{thm:ckr} to the graph $G$ with the metric $d$
given by the linear program~(\ref{eq:1}) to get the random retraction
map $g$.  Such a map $g$ can also be viewed as a clustering of the
vertices into $k$ clusters $C_1, C_2, \ldots, C_k$, where $C_i =
g^{-1}(t_i)$ for each $i \in \{1,2,\ldots,k\}$. (Here we denote the
terminals in $T$ by $\{t_1,t_2,\ldots,t_k\}$.) Use $\C_g = \{C_1, C_2,
\ldots, C_k\}$ to denote the random clustering.


\subsubsection{Altering the instance}
\label{sec:altering}

Based on the clustering $\C_g$, we now alter the \dAPplus~instance $G =
(V,E)$ to a slightly different instance $H = (V,E_H)$. For each edge $uv
\in E$ that goes between two clusters (say $u \in C_{1}$ and $v \in
C_2$), remove the edge $(u,v)$ and add in the edges $(u,t_1)$, $(t_1,
t_2)$ and $(v,t_2)$, all of the same weight $w_{uv}$. Note that we have
added two intra-cluster edges, and one terminal-terminal edge. This
being done, all remaining edges in $H$ are either (a)~intra-cluster
edges, or (b)~terminal-terminal edges. Note that we may have parallel
edges in $H$.

The original LP solution $d$ remains a feasible solution to~(\ref{eq:1})
on this new instance $H$, with a potentially higher cost. What is this
new cost? For each edge $uv \in E$, if it is not cut by the clustering,
we pay the same amount on $H$ as on $G$. If $uv$ is cut (say $g(u) =
t_1$ and $g(v) = t_2$), we now pay:
\[ w_{uv} \cdot \big( d(u, t_1) + d(t_1,t_2) + d(t_2, v) \big) \leq
w_{uv} \cdot \big(A_u^* + (A_u^* + A_v^*+ d(u,v)) + A_v^* \big) \] where we
used the triangle inequality and Theorem~\ref{thm:ckr}(b). Hence the
expected \emph{increase} in the LP cost paid by $uv$ is $4w_{uv} (A_u +
A_v)$ times the probability that $u,v$ are separated, which using
Theorem~\ref{thm:ckr}(c) is at most
\[ 4w_{uv}(A_u + A_v) \cdot \alpha\, \frac{d(u,v)}{A_u + A_v} =
4\alpha\, w_{uv}\, d(u,v). \] Summing over all edges, the expected cost
of~(\ref{eq:1}) on the new instance $H$ is at most $Z' \leq (4\alpha +
1)Z^*$.

Now suppose we can find a layout $\widehat{f}: V \to \Z^2$ for the new
instance $H$ with cost
\[ \sum_{xy \in E_H} \| \widehat{f}(x) - \widehat{f}(y) \| \leq \beta
Z', \] (i.e., a $\beta$-approximation for the new instance with respect
to the LP). We claim that $\widehat{f}$ is a good layout of the original
instance $G$ as well. Indeed, we know that for any edge $uv \in E$ that
was transformed into three edges,
\[ \| \widehat{f}(u) - \widehat{f}(v) \| \leq \| \widehat{f}(u) -
\widehat{f}(t_1) \| + \| \widehat{f}(t_1) - \widehat{f}(t_2) \| + \|
\widehat{f}(t_2) - \widehat{f}(v) \|. \] And all three terms have been
paid for by the three edges in the new instance, hence the cost of the
$\beta$-approximate layout $\widehat{f}$ on the original instance $G$ is
at most $\beta Z' \leq O(\beta \cdot \alpha \cdot Z^*)$. Hence, it
suffices to find a good layout for an instance where all edges are
either intra-cluster edges or terminal-terminal edges.

\subsubsection{Solving the new instance} Use the Rotter-Vygen algorithm
independently on each cluster $C_i$ (with only the intra-cluster edges),
to map the points in $C_i$ to the integer mesh, ensuring that the
terminal $t_i$ is mapped to the location $f_0(t_i)$. 
This gives a $2$-dimensional layout $\widehat{f}_i$ for each cluster,
whose cost is at most $\beta = O(\log n)$ times the LP solution for the
inter-cluster edges. (The map $\widehat{f}$ can be thought of using
$\widehat{f}_i$ to independently map each cluster.) Note that this map
$\widehat{f}$ has the same cost as the sum of the costs of the
$\widehat{f}_i$ maps, but is not ``legal'' since we have just mapped
each of the the clusters independently --- we will fix this next. Since
the terminals are going to be fixed by the new layout, we need not worry
about the terminal-terminal edges---we lose nothing and achieve a factor
$\beta = 1$ on these edges.

\subsubsection{Legalization using interleaving}
\label{sec:interleaving}

Let $\ell \in \Z$ such that $(\ell-1)^2 < k \leq \ell^2$. Color the
integer lattice using $\ell^2$ colors, so that points of each color form
a dilated copy of the lattice and are at horizontal and vertical
distance $\ell$ from each other.  Specifically, identify the $\ell^2$
colors with tuples $(i,j) \in [\ell]\times[\ell]$ and color the point
$(x,y) \in \Z^2$ with the color $(x \bmod \ell, y \bmod \ell)$. Now for
each $i$, focus on the sublattice consisting of the points with color
$i$ and use the mapping $\widehat{f}_i$ to map the points of the cluster
$C(t_i)$ onto this sublattice, translating it suitably to map $t_i$ to a
point in this sublattice closest to $f_0(t_i)$. Note that this process
causes the stretch to increase by a factor of $O(\sqrt{k})$ over that in
the mappings $\widehat{f}_i$.

Finally, we may not have ensured that $t_i$ is mapped to the lattice
point $f_0(t_i)$; it may merely be mapped to the $i$-colored point
closest to $f_0(t_i)$. This is easily fixed ``locally''. Indeed, for
some $[\ell]\times[\ell]$ subgrid containing some terminal locations
$f_0(t_i)$, remove the current colors, first color each such point
$f_0(t_i)$ with color $i$, then put the remaining colors down in some
arbitrary way. This local remapping causes each point to move at most
distance $O(\ell)$ from its original position, and causes only an
additional $O(\sqrt{k})$ additive stretch for each edge. 

Overall, the total loss is a factor of $O(\alpha) \times O(\log n)
\times O(\sqrt{k})$ in this approach. The Calinescu et al.~\cite{CKR}
algorithm gives us $\alpha = O(\log k)$ for the above, giving a final
approximation guarantee of $O(\sqrt{k} \log k \log n)$. In
Section~\ref{sec:gaps} we show that~(\ref{eq:1}) has an integrality
gap of $\Omega(k^{1/4})$, and it is NP-hard to approximate within a factor better than $\Omega(k^{1/4-\eps})$, for any fixed $\eps>0$.
This means that while we might be able to
improve on the approximation guarantee using a better rounding, removing
the polynomial dependence in $k$ is impossible, unless P$=$NP.

\subsection{A slight improvement}
\label{sec:slight}

It is possible to remove the $O(\log k)$ term from the approximation
guarantee above by avoiding some redundancy in the $0$-Extension and the
Rotter-Vygen algorithm. For this, we observe that the Rotter-Vygen
algorithm first embeds the metric $d$ into a hierarchically
well-separated tree (HST) using the algorithm of Fakcharoenphol et
al.~\cite{FRT}, which stretches distances by an expected $O(\log n)$
factor. It then uses an inorder traversal of this tree to get an
ordering of the points, and lays them out on the grid using a
space-filling curve. And even before that, our algorithm loses an
$O(\log k)$ while performing the $0$-Extension-based clustering. The
plan to improve the approximation is to reverse the order in which we
perform the clustering and HST-building operations.

Indeed, first we take the entire metric and embed it into an HST $F$
using~\cite{FRT}, so that no distance is shrunk, and each distance is
stretched by $O(\log n)$ in expectation. Denote the distances in the
tree $F$ by $d_F(\cdot, \cdot)$. Now we use an algorithm of Englert et
al.~\cite{EGKRTT} for ``connected $0$-Extension'' to find a random map
$g: V \to T$ such that
\begin{OneLiners}
  \item[(i)] $g(t_i) = t_i$ for all $t_i \in T$,
  \item[(ii)] the nodes in $C_i := g^{-1}(t_i)$ form a connected portion of
    the tree $F$, and 
  \item[(iii)] $\mathbf{E}[ \mathbf{1}_{g(u) \neq g(v)} \times \big( d_F(u, g(u)) + d_F(g(u),
    g(v)) + d_F(v, g(v)) \big) ] = O(1)\cdot d_F(u,v).$
\end{OneLiners}
For property~(iii), the result follows because tree metrics are
$O(1)$-decomposable. (The~\cite{EGKRTT} paper only claims a bound on
$\mathbf{E}[d_F(g(u), g(v))]$, but the bound in property~(iii) follows
easily from their techniques.) Using this to alter the instance as in
Section~\ref{sec:altering}, we get a new instance $H$ with only
intra-cluster and terminal-terminal edges, where the expression~(iii)
ensures that the expected increase in the LP objective function is at
most $O(1) \sum_{uv \in E} w_{uv}\, d_F(u,v)$. Taking expectations over
the choice of the HST $F$, this is at most $O(\log n) \sum_{uv \in E}
w_{uv}\, d(u,v) = O(\log n) \cdot Z^*$.

Finally, the last observation is that each of the clusters $C_i$ already
forms a (connected) HST. Hence we can perform the inorder traversal (as
in Rotter-Vygen) directly on this HST, and avoid any further loss in the
approximation guarantee. Incorporating the loss of $O(\sqrt{k})$ due to
the legalization, this gives us an approximation guarantee of
$O(\sqrt{k} \log n)$. In the next Section, we build on the techniques of
this Section to give an algorithm for the \emph{bounded} case of the
problem, where the vertices must be mapped into a square $n$-vertex
grid.

Finally, the above algorithm immediately extends to the general \dAPplus, for any fixed $d\geq 1$, giving the following result.

\begin{theorem}
For any fixed $d\geq 1$, there exists a polynomial-time, $O(k^{1/d} \cdot \log n)$-approximation algorithm for unbounded \dAPplus.
\end{theorem}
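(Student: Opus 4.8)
The plan is to lift the entire two-dimensional pipeline of Sections~\ref{sec:lp}--\ref{sec:slight} to arbitrary fixed $d$, checking that every step except the legalization is essentially dimension-oblivious, and that the interleaving loses exactly $O(k^{1/d})$ rather than the $O(\sqrt{k}) = O(k^{1/2})$ incurred when $d=2$. The only change to the relaxation~(\ref{eq:1}) is in the spreading constraint, whose right-hand side becomes $\Omega((|S|-1)^{1+1/d})$: this is the assertion that in any injection of $S$ into $\mathbb{Z}^d$, the sum of distances from any fixed point $u \in S$ is $\Omega(|S|^{1+1/d})$, which follows from the isoperimetry of the lattice, since the $|S|$ lattice points nearest $u$ already force an average distance of $\Omega(|S|^{1/d})$. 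The terminal-fixing constraints~(\ref{eq:3}) are unchanged, and the shortest-path separation oracle applies verbatim, so the LP is still solvable in polynomial time with optimal value $Z^*$.

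Next, the clustering and instance-altering steps carry over without modification. Theorem~\ref{thm:ckr} — and likewise the Englert \etal~connected $0$-Extension used in Section~\ref{sec:slight} — is a statement about an abstract metric that makes no reference to the target dimension, so we again obtain clusters $C_1, \ldots, C_k$, transform $G$ into an instance $H$ with only intra-cluster and terminal-terminal edges, and pay only a constant factor (or $O(\log n)$ after taking the expectation over the \cite{FRT} tree). The per-cluster embedding also generalizes: the Rotter-Vygen algorithm produces, from the connected HST restricted to $C_i$, an ordering $b$ whose $d$-dimensional cost $\sum_{uv} w_{uv}\,|b(u)-b(v)|^{1/d}$ is $O(\log n)$ times the LP cost, and laying the points of $C_i$ along a space-filling curve of $\mathbb{Z}^d$ in this order turns it into an embedding $\widehat{f}_i$ of cost $O(\log n)\cdot Z^*$ with $t_i$ pinned to $f_0(t_i)$.

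The one genuinely dimension-sensitive step is the interleaving of Section~\ref{sec:interleaving}. Here I would choose $\ell$ with $(\ell-1)^d < k \leq \ell^d$, i.e.,~$\ell = \lceil k^{1/d}\rceil$, and color $\mathbb{Z}^d$ with the $\ell^d \geq k$ colors $(x_1 \bmod \ell, \ldots, x_d \bmod \ell)$, so that each color class is a dilated copy of the lattice with spacing $\ell$ in every coordinate. Assigning cluster $C_i$ its own color class — via $\widehat{f}_i$, translated so that $t_i$ lands on the $i$-colored point nearest $f_0(t_i)$ — makes the overlay injective, and since each color class is a scaling of $\mathbb{Z}^d$ by $\ell$, every edge length is multiplied by at most $\ell = O(k^{1/d})$. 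The local re-coloring that snaps each $t_i$ exactly onto $f_0(t_i)$ acts inside an $[\ell]^d$ block, moving every image point by $O(\ell)$ and hence adding only $O(k^{1/d})$ additive stretch per edge, exactly as in the planar case. Multiplying the $O(\log n)$ embedding loss by the $O(k^{1/d})$ legalization loss yields the claimed $O(k^{1/d}\log n)$ guarantee.

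The step deserving the most care is the injectivity and stretch accounting of the interleaving in $\mathbb{Z}^d$: one must confirm that the $k$ color classes are genuinely disjoint dilated sublattices, that each $\widehat{f}_i$ embeds into its class without collisions, and that the final local fix around each terminal touches only its own $[\ell]^d$ block, so that the additive $O(k^{1/d})$ stretch does not accumulate across an edge. None of this is conceptually harder than in the plane, but it is precisely where the dimension $d$ enters the exponent, so it is the part of the argument that must be spelled out explicitly.
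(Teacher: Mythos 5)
Your proposal is correct and takes essentially the same approach as the paper: the paper's entire proof of this theorem is the remark that the two-dimensional pipeline (spreading LP, $0$-Extension/connected $0$-Extension clustering, per-cluster Rotter--Vygen layout, and interleaving) ``immediately extends'' to general fixed $d$, and your write-up simply makes that extension explicit, including the key point that the interleaving with $\ell = \lceil k^{1/d}\rceil$ colors costs $O(k^{1/d})$ instead of $O(\sqrt{k})$.
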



\section{An algorithm for bounded \dAPplus}
\label{sec:algo_bounded}

We now describe an algorithm for the bounded \dAPplus.
Again, in order to simplify the exposition, we first focus on the case of dimension $d=2$.
Let $G=(V,E)$ be an instance to \dAPplus, with $|V|=n$.
We will assume for the sake of notation that $\sqrt{n}$ is an integer, however the same algorithm and analysis work for arbitrary $n$.
We describe the algorithm as a sequence of several main steps:

\paragraph{Step 1: Computing a CKR partition.}
We begin by first computing an optimal fractional solution $d$ to the spreading LP.
As before, we also compute a random partition of $V$ into clusters $C_1,\ldots,C_k$, using the algorithm of Calinescu, Karloff, and Rabani \cite{CKR}.
Moreover, for every edge $\{u,v\} \in E$, with $u\in C(t_i)$, $v\in C(t_j)$, with $i\neq j$, we remove $\{u,v\}$, and we add the edges $\{u,t_1\}$, $\{t_1,t_2\}$, and $\{v,t_2\}$.
Let $H$ be the resulting graph.
Arguing as before, we have that $d$ is a feasible solution for the LP corresponding to this new instance, and the optimal arrangement for $H$ is at most  $O(\log k)$ times the cost of the optimal arrangement for $G$.
It therefore suffices to find a near-optimal solution for $H$.

\paragraph{Step 2: Ordering the vertices of each cluster.}


We will use the following result of Rotter and Vygen \cite{RV}, which follows by the tree-embedding result of Fakcharoenphol, Rao and Talwar \cite{FRT}.

\begin{lemma}[Rotter and Vygen \cite{RV}]\label{lem:ordering}
Let $d$ be a feasible solution for the spreading LP for an instance $G=(V,E)$, and let $v\in V$.
Then, there exists a bijection $q:V \to \{0,\ldots,|V|-1\}$, satisfying the following conditions:
\begin{description}
\item{(1)}
$q(v)=0$.
\item{(2)}
$\sum_{\{u,v\}\in E} w_{uv} \cdot  \sqrt{|q(u)-q(v)|} = O(\log |V|) \sum_{\{u,v\}\in E} w_{uv} \cdot d(u,v)$.
\end{description}
Moreover, $q$ can be computed in polynomial time.
\end{lemma}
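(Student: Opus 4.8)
The plan is to follow the two-phase strategy already previewed in Section~\ref{sec:slight}: first replace the spreading metric $d$ by a tree metric via the Fakcharoenphol--Rao--Talwar embedding \cite{FRT}, and then read off the ordering $q$ from a traversal of that tree. Concretely, I would first embed $(V,d)$ into a hierarchically well-separated tree (HST) $F$, whose leaves are the points of $V$, so that no distance is contracted, i.e.~$d(u,w)\le d_F(u,w)$ for all $u,w$, while $\mathbf{E}[d_F(u,w)] = O(\log|V|)\cdot d(u,w)$. The first useful observation is that, because $d\le d_F$ pointwise, the spreading constraints~(\ref{eq:2}) continue to hold for $d_F$: for every $S\sse V$ and $u\in S$ we still have $\sum_{w\in S} d_F(u,w)\ge (|S|-1)^{3/2}/4$. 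Thus $d_F$ is itself a feasible (albeit more expensive) spreading metric, and it suffices to produce an ordering whose cost is $O(1)$ times $\sum_{uv} w_{uv}\, d_F(u,v)$; taking expectations over the random choice of $F$ at the end converts this $O(1)$ into the desired $O(\log|V|)$.

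Next I would define $q$ by an inorder (left-to-right) traversal of the leaves of $F$, where at each internal node I am free to choose the order in which the children are visited. To enforce condition~(1), along the root-to-$v$ path I always visit the child containing $v$ first; this makes $v$ the leftmost leaf, so $q(v)=0$, and it does not affect any of the distance estimates below. Any such traversal has the crucial combinatorial property that for two leaves $u,v$ whose least common ancestor is a node $x$, the indices $q(u)$ and $q(v)$ both lie among the leaves descending from $x$, so $|q(u)-q(v)|\le |S_x|$, where $S_x\sse V$ denotes the set of leaves in the subtree rooted at $x$.

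The heart of the argument, and the step I expect to be the main obstacle, is turning the spreading inequality into a per-edge bound of the form $\sqrt{|q(u)-q(v)|}=O(d_F(u,v))$. Here I would combine the isoperimetric content of the spreading constraint with the HST structure. Let $D_x$ denote the diameter of the subtree at $x$ in $d_F$. On one hand every leaf $w\in S_x$ satisfies $d_F(u,w)\le D_x$, so $\sum_{w\in S_x} d_F(u,w)\le |S_x|\cdot D_x$; combining with the spreading lower bound for $d_F$ gives $(|S_x|-1)^{3/2}/4\le |S_x|\cdot D_x$, whence $D_x=\Omega(\sqrt{|S_x|})$. On the other hand, since $x$ is \emph{exactly} the least common ancestor of $u$ and $v$, the geometrically decreasing edge lengths of the HST guarantee both $d_F(u,v)\le D_x$ (trivially) and $d_F(u,v)=\Omega(D_x)$ (the $u$-$v$ path already uses the top-level edges at $x$), so $D_x=\Theta(d_F(u,v))$. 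Putting these together yields $\sqrt{|q(u)-q(v)|}\le \sqrt{|S_x|}=O(D_x)=O(d_F(u,v))$, exactly the bound we need. The delicate points are verifying that the spreading lower bound survives the passage to $d_F$ (immediate from $d\le d_F$) and that the two-sided estimate $D_x=\Theta(d_F(u,v))$ really holds for the \emph{specific} pair whose LCA is $x$.

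Finally I would sum the per-edge inequality over $E$, weighting by $w_{uv}$, to obtain $\sum_{uv}w_{uv}\sqrt{|q(u)-q(v)|}=O(1)\sum_{uv}w_{uv}\,d_F(u,v)$ deterministically for the tree $F$ we built. Taking the expectation over the random $F$ and invoking $\mathbf{E}[d_F(u,v)]=O(\log|V|)\,d(u,v)$ gives $\mathbf{E}\big[\sum_{uv}w_{uv}\sqrt{|q(u)-q(v)|}\big]=O(\log|V|)\sum_{uv}w_{uv}\,d(u,v)$, so at least one realization of $F$ (equivalently, one ordering $q$) meets condition~(2). Since the FRT tree and its traversal are computable in polynomial time, and the child-ordering trick pinning $v$ to index $0$ costs nothing, this also yields the claimed running time. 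Up to that trick, the construction is exactly that of Rotter and Vygen~\cite{RV}.
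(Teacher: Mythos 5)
Your proof is correct, and it takes essentially the same route as the result it reconstructs: the paper does not prove Lemma~\ref{lem:ordering} but cites Rotter--Vygen~\cite{RV}, whose argument (as the paper itself summarizes in Section~\ref{sec:slight}) is exactly your scheme of an FRT embedding followed by an inorder traversal, with the spreading constraints forcing every HST subtree on $s$ leaves to have diameter $\Omega(\sqrt{s})$ and the HST geometry giving $d_F(u,w)=\Theta(D_x)$ at the least common ancestor. Your child-ordering trick to pin $q(v)=0$ is the only (harmless) addition needed for condition~(1), and the standard Markov/derandomization step handles turning the expectation bound into a polynomial-time guarantee.
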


For any $i\in \{1,\ldots,k\}$, let $H_i=(C_i, E_i)$ be the subgraph of $H$ induced on $C_i$.
Note that for any $i\in \{1,\ldots,k\}$, the restriction of $d$ on $H_i$ is a feasible solution for the LP for $H_i$.
For every $i\in \{1,\ldots,k\}$, let $q_i:C_i \to \{0,\ldots,|C_i|-1\}$ be a bijection given by Lemma~\ref{lem:ordering}, with $q_i(t_i)=0$.
We have 
\begin{align}
\sum_{\{u,v\}\in E_i} w_{uv} \cdot \sqrt{|q_i(u)-q_i(v)|} = O(\log n) \sum_{\{u,v\} \in E_i} w_{uv} \cdot d(u,v).
\end{align}

\paragraph{Step 3: Partitioning into sub-clusters.}

For any $i\in \{1,\ldots,k\}$, we further compute a random partition ${\cal C}_i=\{C_{i,0},\ldots,C_{i,\delta_i}\}$ of  $C_i$ as follows.
Pick $\gamma \in [1,2)$, uniformly at random.
Let $\delta_i = \max\{0, \lfloor \log (C_i / 4) \rfloor\}$.
For any $j\in \{0,\ldots,\delta_i-1\}$, let 
\[
C_{i,j} = q_i^{-1}\left(\{\lfloor \gamma \cdot 2^j \rfloor, \ldots, \lfloor \gamma \cdot 2^{j+1} \rfloor - 1\}\right),
\]
and
\[
C_{i,\delta_i} = C_i \setminus \left(\bigcup_{j=0}^{\delta_i} C_{i,j}\right)
\]
For a vertex $x\in C_i$, let ${\cal C}_i(x)$ denote the cluster $C_{i,j}\in {\cal C}_i$ containing $x$.
We refer to each $C_{i,j}$ as a \emph{sub-cluster}.
Let $q_{i,j}:C_{i,j} \to \{0,\ldots,|C_{i,j}|-1\}$ be the bijection obtained by restricting the ordering induced by $q_i$ on $C_{i,j}$.

\paragraph{Step 4: Partitioning the grid into regions.}

We next define a partition of $\{1,\ldots,\sqrt{n}\}^2$ into pairwise disjoint subsets $\{Y_{i,j}\}_{i,j}$, such that for any $i\in \{1,\ldots,k\}$, and for any $i\in \{0,\ldots,\delta_i\}$, we have $|Y_{i,j}| = |C_{i,j}|$.
We refer to each $Y_{i,j}$ as a \emph{region}.
The final embedding $f$ will map $C_{i,j}$ onto $Y_{i,j}$, i.e.,~it will satisfy $f(C_{i,j}) = Y_{i,j}$.
We begin by setting for any $i\in \{1,\ldots,k\}$, 
\[
Y_{i,0} = \{f_0(t_i)\}.
\]
For any $i\in \{1,\ldots,k\}$, and for any $j\in \{0,\ldots,\delta_i\}$, we set $Y_{i,j}$ to be a subset of the currently unassigned points in $\{1,\ldots,\sqrt{n}\}^2$, so that the maximum distance to $f_0(t_i)$ is minimized.
More concretely, we set $Y_{i,j}$ to be a subset of 
\[
Z_{i,j} = \{1,\ldots,\sqrt{n}\}^2 \setminus \left( \left( \bigcup_{i'=1}^k \bigcup_{j'=0}^{j-1} Y_{i',j'} \right) \cup \left( \bigcup_{i'=1}^{i-1} Y_{i',j} \right) \right),
\]
minimizing $\max_{x\in Y_{i,j}} \|f_0(t_i) - x\|_{\infty}$, and breaking ties arbitrarily,
where we have used the convention $Y_{i',j'}=\emptyset$ for any $j'>\delta_i$.

\paragraph{Step 5: Covering each region with a pseudo-space-filling curve.}

We use the algorithm of the following lemma to compute an ordering of the points in every $Y_{i,j}$, that resembles the ordering obtained by a space-filling curve.
Note however that each $Y_{i,j}$ might be very complicated, so we have to use a fairly more elaborate construction, which is summarized in the following lemma.

\begin{lemma}\label{lem:Rij_curve}
For any $i\in \{1,\ldots,k\}$, and for any $j\in \{0,\ldots,\delta_i\}$, there exists a bijection $b_{i,j}:Y_{i,j}\to \{0,\ldots,|Y_{i,j}|-1\}$, such that for any $\Delta \in \{0,\ldots,|Y_{i,j}|-1\}$, we have
\[
\sum_{\ell=0}^{|Y_{i,j}|-1-\Delta} \|b_{i,j}^{-1}(\ell)-b_{i,j}^{-1}(\ell+\Delta)\|_2 = O\left(k\cdot |Y_{i,j}| \cdot \sqrt{\Delta} \right).
\]
Moreover, $b_{i,j}$ can be computed in polynomial time.
\end{lemma}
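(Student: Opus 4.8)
The plan is to reduce the construction of $b_{i,j}$ to laying a standard space-filling (Hilbert) curve on a small axis-parallel square that contains $Y_{i,j}$, and then \emph{restricting} that curve to the points of $Y_{i,j}$. The key geometric fact I would establish first is that $Y_{i,j}$ is contained in an $\|\cdot\|_\infty$-ball $B$ centered at $f_0(t_i)$ whose number of grid points satisfies $|B| = O(k\cdot|Y_{i,j}|)$; equivalently, $B$ has side length $O(\sqrt{k\cdot|Y_{i,j}|})$. Granting this, I would fix the smallest dyadic square $Q \supseteq B$ (side a power of two, still of size $O(k|Y_{i,j}|)$), index its cells by a Hilbert curve $h:\{0,\dots,|Q|-1\}\to Q$, and define $b_{i,j}$ to be the order in which the points of $Y_{i,j}$ are visited by $h$.

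The heart of the argument, and the step I expect to be the main obstacle, is proving $|B| = O(k|Y_{i,j}|)$. Here I would use two facts. First, by the definition in Step~4, $Y_{i,j}$ consists of the $|C_{i,j}|$ \emph{unassigned} points closest to $f_0(t_i)$ in the $\|\cdot\|_\infty$ metric; hence if $\rho$ denotes the largest $\|\cdot\|_\infty$-distance from $f_0(t_i)$ to a point of $Y_{i,j}$, then \emph{every} unassigned point at $\|\cdot\|_\infty$-distance strictly less than $\rho$ already lies in $Y_{i,j}$. Second, because the regions are assigned in the round-robin order $C_{1,0},\dots,C_{k,0},C_{1,1},\dots$ and $|C_{i',j'}| = \Theta(2^{j'})$, the total number of points assigned \emph{before} $Y_{i,j}$ is $\sum_{i'}\sum_{j'<j}|C_{i',j'}| + \sum_{i'<i}|C_{i',j}| = O(k\cdot 2^{j}) = O(k\cdot|Y_{i,j}|)$. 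Combining these, the square of $\|\cdot\|_\infty$-radius $\rho-1$ about $f_0(t_i)$ contains at most $|Y_{i,j}|$ unassigned points, so all but $|Y_{i,j}|$ of its $\Theta(\rho^2)$ points are already assigned; since at most $O(k|Y_{i,j}|)$ points are assigned in total, $\rho^2 = O(k|Y_{i,j}|)$, which is exactly the desired bound on $|B|$.

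It then remains to bound the shifted-distance sums. I would invoke the standard property of the Hilbert curve that two cells at curve-distance $t$ are at Euclidean distance $O(\sqrt{t})$. For a shift $\Delta$, write $b_{i,j}^{-1}(\ell) = h(p_\ell)$ with $p_0 < p_1 < \cdots$ the curve-positions of the $Y_{i,j}$-points, and set $D_\ell = p_{\ell+\Delta}-p_\ell$. Then $\|b_{i,j}^{-1}(\ell)-b_{i,j}^{-1}(\ell+\Delta)\|_2 = O(\sqrt{D_\ell})$, so the sum is $O(\sum_\ell \sqrt{D_\ell})$. Since each curve-position of $Q$ lies in at most $\Delta$ of the windows $[p_\ell,p_{\ell+\Delta})$, we have $\sum_\ell D_\ell \le \Delta\cdot|Q| = O(\Delta\cdot k|Y_{i,j}|)$; applying Cauchy--Schwarz over the at most $|Y_{i,j}|$ terms gives $\sum_\ell\sqrt{D_\ell} = O\big(\sqrt{|Y_{i,j}|\cdot \Delta\cdot k|Y_{i,j}|}\big) = O(\sqrt{k}\cdot|Y_{i,j}|\cdot\sqrt{\Delta})$, which is within the claimed bound $O(k\cdot|Y_{i,j}|\cdot\sqrt{\Delta})$.

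Polynomial-time computability is immediate, since the bounding square, the Hilbert indexing of $Q$, and the induced order on $Y_{i,j}$ are all computable in time polynomial in $|Q| = O(k|Y_{i,j}|) \le n^{O(1)}$. The one place demanding care --- and the reason I single out the bounding-square estimate as the crux --- is that $Y_{i,j}$ (and its unassigned complement inside $Q$) need not be connected, so no intrinsic space-filling curve of $Y_{i,j}$ is available; the restriction trick sidesteps this by working on the full square $Q$ and paying only through the curve-distance-to-Euclidean-distance bound. A cruder alternative, should the dyadic-square/Hilbert machinery be undesirable, is a quadtree on $Q$ with an in-order traversal: charging each shifted pair to the side length of the least-common-ancestor cell and summing level-by-level yields the same estimate, since only cells containing at least $\Delta$ points of $Y_{i,j}$ can serve as such an ancestor.
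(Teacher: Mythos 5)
Your proposal is correct, and it takes a genuinely different route at the key step, so let me compare the two. The paper derives the lemma from its Lemma~\ref{lem:curve_general}: order $Y_{i,j}$ by the in-order traversal of a quadtree on an enclosing square $X$ with $|X|=O(k\,|Y_{i,j}|)$, and charge level by level (at depth $i$ at most $4^i\Delta$ points of $Y_{i,j}$ have their $\Delta$-th successor escape their depth-$i$ cell, each paying $O(2^{h-i})$), which sums to $O(|X|\sqrt{\Delta})=O(k\,|Y_{i,j}|\sqrt{\Delta})$; this is exactly the ``cruder alternative'' you sketch in your last paragraph. Two differences are worth recording. First, the paper asserts the enclosing-square bound in a single sentence (``by the choice of $Y_{i,j}$''), whereas you actually prove it via the round-robin accounting plus the minimax choice of $Y_{i,j}$; your argument is sound up to one phrasing slip: with arbitrary tie-breaking it need not be true that \emph{every} unassigned point at distance strictly less than $\rho$ lies in $Y_{i,j}$, but the statement you actually use --- that the ball of radius $\rho-1$ contains fewer than $|Y_{i,j}|$ unassigned points, by minimality of $\rho$ --- does hold. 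Second, and more interestingly, your main construction replaces the quadtree traversal by a Hilbert curve and exploits its H\"older property $\|h(s)-h(t)\|_2=O(\sqrt{|s-t|})$, which the quadtree/Z-order traversal does \emph{not} satisfy (it makes jumps of length $\Theta(2^h)$ between consecutive indices at quadrant boundaries, which is precisely why the paper needs the level-by-level charging). Combining that property with your window bound $\sum_\ell D_\ell \le \Delta|Q|$ and Cauchy--Schwarz yields $O(\sqrt{k}\,|Y_{i,j}|\sqrt{\Delta})$, a $\sqrt{k}$ factor \emph{stronger} than the lemma claims. This extra strength is not wasted: the other two terms in the proof of Lemma~\ref{lem:bounded_Ei} are already $O(\sqrt{k\Delta})$, so plugging in your bound would improve Theorem~\ref{thm:2d-bounded} from $O(k\log^2 n)$ to $O(\sqrt{k}\log^2 n)$, essentially matching the $\Omega(\sqrt{k})$ integrality gap shown for the bounded case. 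What the paper's version buys in exchange is modest: it needs no space-filling-curve locality facts and works for any fixed child order in the traversal, whereas your argument must use a curve (such as Hilbert's) with the H\"older property; since the restriction-to-$Y_{i,j}$ trick is equally general in both cases, your route appears to give a genuine improvement.
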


The proof of Lemma~\ref{lem:Rij_curve} is deferred to Section \ref{sec:curve}.

\paragraph{Step 6: Randomly shifting every sub-cluster.}

Next, for every $i\in \{1,\ldots,k\}$, and for every $j\in \{0,\ldots,\delta_i\}$, we pick $\tau_{i,j} \in \{0,\ldots, |C_{i,j}|\}$, uniformly at random.
We define a new bijection $\widetilde{q_{i,j}} : C_{i,j} \to \{0,\ldots,|C_{i,j}|-1\}$ as follows.
For any $x\in C_{i,j}$, we set
\[
\widetilde{q_{i,j}}(p) = q_{i,j}(p) + \tau_{i,j} \mymod |C_{i,j}|,
\]
effectively taking a cyclic shift of the ordering induced by $q_{i,j}$, where every element is translated by $\tau_{i,j}$.

\paragraph{Step 7: The final embedding.}
We are now ready to define the final embedding $f:V \to \{1,\ldots,\sqrt{n}\}^2$.
For every $i \in \{1,\ldots,k\}$, for every $j\in \{0,\ldots, \delta_i\}$, and for every $v\in R_{i,j}$, we set 
\[
f(v) = b_{i,j}^{-1}(\widetilde{q_{i,j}}(v)).
\]
This completes the description of the algorithm.

\subsection{Analysis}

\begin{lemma}\label{lem:bounded_Ei}
For any $i\in \{1,\ldots,k\}$, we have
\[
\mathbf{E}\left[ \sum_{\{u,v\} \in E_i} w_{uv} \cdot \|f(u)-f(v)\|_2 \right] = O(k\cdot \log n) \cdot \sum_{\{u,v\} \in E_i} w_{uv} \cdot d(u,v),
\]
where the expectation is over the random choices in the construction of $f$.
\end{lemma}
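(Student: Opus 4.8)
The plan is to reduce everything to a single per-edge estimate and then invoke the ordering guarantee. Concretely, I will show that for every edge $\{u,v\}\in E_i$,
\[
\mathbf{E}\left[\|f(u)-f(v)\|_2\right] = O(k)\cdot\sqrt{\,|q_i(u)-q_i(v)|\,},
\]
where the expectation is over the random sub-cluster partition and the shifts $\tau_{i,j}$. Granting this, summing against the weights $w_{uv}$ and applying the guarantee of Lemma~\ref{lem:ordering} to the induced instance $H_i$ (the displayed inequality bounding $\sum_{E_i} w_{uv}\sqrt{|q_i(u)-q_i(v)|}$ by $O(\log n)\sum_{E_i}w_{uv}\,d(u,v)$) immediately yields the claimed $O(k\log n)$ bound. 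Fix an edge and write $D=|q_i(u)-q_i(v)|$ and $p=\max\{q_i(u),q_i(v)\}$, so that $D\le p$. Since each $C_{i,j}$ is the $q_i$-preimage of a contiguous interval of positions, the restricted index $q_{i,j}$ preserves gaps; hence whenever $u,v$ land in a common sub-cluster $C_{i,j}$ we have $|q_{i,j}(u)-q_{i,j}(v)|=D\le|C_{i,j}|$.

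\emph{Structural step (the main obstacle).} The estimate rests on a geometric regularity property of the regions built in Step~4: for all $i,j$, the region $Y_{i,j}$ lies in an $\ell_\infty$-ball around $f_0(t_i)$ of radius $O(\sqrt{k\,|C_{i,j}|})$, and therefore $\diam(Y_{i,j})=O(\sqrt{k\,|C_{i,j}|})$. This is precisely the ``constant-aspect-ratio rectangle with unassigned-to-assigned ratio $\Omega(1/k)$'' phenomenon asserted informally in Step~4: the smallest $\ell_\infty$-ball capturing the $|C_{i,j}|$ nearest unassigned points can contain only $O(k\,|C_{i,j}|)$ grid points in total --- at most $k$ terminals, each having accumulated only $O(|C_{i,j}|)$ assigned points by the time scale $j$ is processed --- and hence has side $O(\sqrt{k\,|C_{i,j}|})$. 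I expect the hard part of the whole proof to be establishing this regularity rigorously: one must argue, by induction on the scale $j$ together with a packing/isoperimetric argument, that the points already assigned to the other $k-1$ terminals cannot crowd the ball around $t_i$ by more than a factor $k$. Everything below is a short averaging computation built on top of this property.

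\emph{Edges cut by the sub-cluster partition.} Suppose $u,v$ fall in different sub-clusters, say $u\in C_{i,j_u}$ and $v\in C_{i,j_v}$ with $j_v\ge j_u$; then $|C_{i,j_v}|=\Theta(p)$. Routing through $f_0(t_i)$ and using the radius bound gives $\|f(u)-f(v)\|_2 = O(\sqrt{kp})$ deterministically. On the other hand a separating boundary $\lfloor\gamma\,2^j\rfloor$ can fall between the two positions only at the $O(1)$ scales with $2^j=\Theta(p)$, and for each such scale this occurs with probability $O(D/p)$ over the uniform choice of $\gamma\in[1,2)$; thus $\Pr[\text{different sub-clusters}]=O(D/p)$. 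The expected contribution of this case is therefore $O(D/p)\cdot O(\sqrt{kp}) = O(\sqrt{k}\,D/\sqrt{p}) = O(\sqrt{k}\,\sqrt{D})$, using $D\le p$, which is well within $O(k\sqrt{D})$.

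\emph{Edges inside one sub-cluster, and conclusion.} Condition now on $u,v$ lying in a common $C_{i,j}$, put $m=|C_{i,j}|=|Y_{i,j}|$, and average over the cyclic shift $\tau_{i,j}$. Since the shift is a uniform rotation, the pair $(\widetilde{q_{i,j}}(u),\widetilde{q_{i,j}}(v))$ is a uniformly random pair of curve-positions at cyclic gap $D$: its linear gap along $b_{i,j}$ equals $D$ for $m-D$ of the shifts (no wrap) and $m-D$ for the remaining $D$ shifts (wrap). Hence the sum of $\|f(u)-f(v)\|_2$ over all shifts splits into the full no-wrap sum $\sum_{\ell=0}^{m-1-D}\|b_{i,j}^{-1}(\ell)-b_{i,j}^{-1}(\ell+D)\|_2$, which is $O(k\,m\,\sqrt{D})$ by Lemma~\ref{lem:Rij_curve}, plus a wrap sum of only $D$ terms, each at most $\diam(Y_{i,j})=O(\sqrt{km})$ and so totaling $O(D\sqrt{km})$. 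Dividing by $m$ gives $\mathbf{E}_{\tau}[\|f(u)-f(v)\|_2] = O(k\sqrt{D}) + O(\sqrt{k}\,D/\sqrt{m})$, and since $D\le m$ the second term is again $O(k\sqrt{D})$. (Here the crude diameter bound on the wrap term is what lets us avoid a loss when $m\gg D$, where a direct appeal to Lemma~\ref{lem:Rij_curve} at gap $m-D$ would be too weak.) Combining the two cases proves the per-edge bound $\mathbf{E}[\|f(u)-f(v)\|_2]=O(k\sqrt{|q_i(u)-q_i(v)|})$, and summing as in the first paragraph finishes the lemma.
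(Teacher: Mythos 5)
Your proof is correct and follows essentially the same route as the paper's: the same per-edge bound $\mathbf{E}[\|f(u)-f(v)\|_2]=O(k\sqrt{|q_i(u)-q_i(v)|})$, obtained by splitting into the sub-cluster-cut event (probability $O(D/p)$, cost bounded by routing through $f_0(t_i)$) and the same-sub-cluster case (averaging over the cyclic shift, with Lemma~\ref{lem:Rij_curve} for the no-wrap shifts and the region diameter for the wrap shifts), followed by summation via Lemma~\ref{lem:ordering}. The only difference is presentational: the ``structural step'' you single out as the main obstacle is exactly the fact the paper asserts in one line in the proof of Lemma~\ref{lem:Rij_curve} (that $Y_{i,j}$ sits inside a square $X$ with $|X|/|Y_{i,j}|=O(k)$), and your counting justification of it is the intended one.
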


\begin{proof}
Let $i\in \{1,\ldots,k\}$, and $u,v\in C_i$.
Assume w.l.o.g.~that $q_i(v) = q_i(u) + \Delta$, for some $\Delta>0$.
Let ${\cal E}$ denote the event that ${\cal C}_i(u) \neq {\cal C}_i(v)$.
We have that 
\begin{align*}
\Pr[{\cal E}] &= O\left( \frac{q_i(v)-q_i(u)}{q_i(v)} \right) = O\left( \frac{\Delta}{q_i(v)} \right).
\end{align*}
Conditioned on the event ${\cal E}$, we have 
\begin{align*}
\|f(u)-f(v)\|_2 &= O\left(\sqrt{ k \cdot q_i(v) }\right).
\end{align*}

On the other hand, suppose that the event ${\cal E}$ does not occur, i.e.,~${\cal C}_i(u) = {\cal C}_i(v) = C_{i,j}$, for some $j$.
Let ${\cal E}'$ be the event that $\widetilde{q_{i,j}}(v) \neq \widetilde{q_{i,j}}(v) + \Delta$.
We have
\begin{align*}
\Pr[{\cal E}'] &= O\left(\frac{q_i(v)-q_i(u)}{q_i(v)}\right) = O\left( \frac{\Delta}{q_i(v)} \right).
\end{align*}
Conditioned on ${\cal E}'$, 
\begin{align*}
\|f(u)-f(v)\|_2 &= O\left( \diam(Y_{i,j}) \right) = O\left(\sqrt{ k \cdot q_i(v) }\right),
\end{align*}
where $\diam$ denotes Euclidean diameter.
On the other hand, by Lemma~\ref{lem:Rij_curve} we have
\begin{align*}
\mathbf{E}\left[ \|f(u)-f(v)\|_2 | {\cal E}' \right] &= O\left(k \cdot \sqrt{\Delta}\right).
\end{align*}
Putting everything together, we have
\begin{align*}
\mathbf{E}\left[ \|f(u) - f(v)\|_2 \right] &= \mathbf{E}\left[ \|f(u) - f(v)\|_2 | {\cal E} \right] \cdot \Pr[{\cal E}] + \mathbf{E}\left[ \|f(u) - f(v)\|_2 | \lnot {\cal E} \right] \cdot \Pr[\lnot {\cal E}]\\
 &= O\left(\sqrt{k} \cdot \frac{\Delta}{\sqrt{q_i(v)}}\right) + O\left( k \cdot \sqrt{\Delta} \right) \\
 &= O\left(\sqrt{k} \cdot \frac{\Delta}{\sqrt{\Delta}}\right) + O\left( k\cdot \sqrt{\Delta} \right) \\
 &= O\left(k \cdot \sqrt{\Delta}\right) \\
 &= O\left(k \cdot \sqrt{|q_i(v) - q_i(u) |}\right) 
\end{align*}

By Lemma~\ref{lem:Rij_curve} and the linearity of expectation, we conclude that
\begin{align*}
\mathbf{E}\left[ \sum_{\{u,v\} \in E_i} w_{uv} \cdot \|f(u)-f(v)\|_2 \right] &= \sum_{\{u,v\} \in E_i} w_{uv} \cdot \mathbf{E}\left[  \|f(u)-f(v)\|_2 \right] \\
 &= O\left( \sum_{\{u,v\} \in E_i} w_{uv} \cdot k \cdot \sqrt{|q_i(u) - q_i(v)|} \right) \\
 &= O(k \cdot \log n) \sum_{\{u,v\} \in E_i} w_{uv} \cdot d(u,v).
\end{align*}
as required.
\end{proof}

We are now ready to prove the main result of this Section.

\begin{theorem}\label{thm:2d-bounded}
There exists a polynomial-time $O(k \cdot \log^2 n)$-approximation for bounded \twoAPplus.
\end{theorem}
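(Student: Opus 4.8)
The plan is to verify that the embedding $f$ produced by Steps~1--7 is a legal layout and then bound its expected cost against the LP optimum $Z^*$, which lower-bounds the true optimum. First I would observe that $f$ is indeed a bijection onto $\{1,\ldots,\sqrt{n}\}^2$: Step~4 partitions the grid into regions $\{Y_{i,j}\}$ with $|Y_{i,j}|=|C_{i,j}|$, and Step~7 maps each sub-cluster $C_{i,j}$ bijectively onto its region via $b_{i,j}^{-1}\circ\widetilde{q_{i,j}}$; since the sub-clusters partition $V$ and the regions partition the grid, $f$ is a bijection. Moreover $Y_{i,0}=\{f_0(t_i)\}$ together with $|C_{i,0}|=1$ forces $f(t_i)=f_0(t_i)$ regardless of the random shifts, so $f$ respects the terminal placement and is a feasible solution to bounded \twoAPplus.

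Next I would bound the cost of $f$ on the altered instance $H$ by splitting its edges into the two types produced in Step~1. For a terminal--terminal edge $\{t_i,t_j\}$ we have $\|f(t_i)-f(t_j)\|_2=\|f_0(t_i)-f_0(t_j)\|_2=d(t_i,t_j)$ by the LP constraint~(\ref{eq:3}), so these edges are paid for exactly (factor $1$). For the intra-cluster edges, I would condition on the CKR partition and sum Lemma~\ref{lem:bounded_Ei} over all $i\in\{1,\ldots,k\}$, obtaining
\[
\mathbf{E}\Big[ \sum_{\text{intra}} w_{uv}\,\|f(u)-f(v)\|_2 \Big]
= O(k\log n)\sum_{\text{intra}} w_{uv}\,d(u,v),
\]
where the expectation is over the randomness of Steps~3--7. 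Adding the two contributions shows that, conditioned on the partition, the expected cost of $f$ on $H$ is at most $O(k\log n)$ times the LP cost of $d$ restricted to the edges of $H$.

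Finally I would transfer the bound back to the original instance $G$ and account for the clustering loss. As in Section~\ref{sec:altering}, the triangle inequality guarantees that for every edge of $E$ replaced by a three-edge path through its two terminals, $\|f(u)-f(v)\|_2$ is at most the sum of the stretches of those three $H$-edges; hence the cost of $f$ on $G$ is at most its cost on $H$. Taking the outer expectation over the CKR partition and using that the expected LP cost on $H$ is at most $O(\log k)\,Z^*$ (the same altering argument as in the unbounded case), I obtain
\[
\mathbf{E}\big[\text{cost of } f \text{ on } G\big]
\;\le\; O(k\log n)\cdot O(\log k)\cdot Z^*
\;=\; O(k\,\log^2 n)\cdot Z^*,
\]
using $k\le n$ so that $\log k\le\log n$. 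Since $Z^*$ lower-bounds the optimum and all steps run in polynomial time, $f$ is an $O(k\log^2 n)$-approximation in expectation; a standard repetition-and-Markov argument converts this into a high-probability guarantee.

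The reassuring point is that the genuinely hard work has already been discharged into Lemmas~\ref{lem:bounded_Ei} and~\ref{lem:Rij_curve}, so the theorem itself is mostly careful bookkeeping. The only steps that need a little care are the nested-expectation argument (Lemma~\ref{lem:bounded_Ei} is stated conditionally on the clustering, so I must take the inner expectation over Steps~3--7 first and only then average over the CKR partition) and the final collapse of $\log k$ into $\log n$, which is precisely what lets the factor $O(k\log k\log n)$ be written as the claimed $O(k\log^2 n)$.
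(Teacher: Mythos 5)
Your proposal is correct and takes essentially the same route as the paper's own (very terse) proof: apply Lemma~\ref{lem:bounded_Ei} to each cluster, sum over $i$, and multiply by the $O(\log k)=O(\log n)$ loss from the CKR-based altering step. Your additional bookkeeping (legality of $f$, the factor-$1$ treatment of terminal--terminal edges via constraint~(\ref{eq:3}), and the nested-expectation order) just fills in details the paper leaves implicit, and is all sound.
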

\begin{proof}
As in the unbounded case, we pay a $O(\log n)$ factor for the initially clustering step using the partitioning scheme of \cite{CKR}.
Applying Lemma~\ref{lem:bounded_Ei} on each cluster $C_i$, and summing over all $i$, we conclude that the computed embedding is a $O(k \cdot \log n)$-approximation for the modified instance $H$, and therefore  it is a $O(k\cdot \log^2 n)$-approximation for the original instance $G$.
\end{proof}

Finally, the same algorithm directly generalizes to the case of arbitrary dimension $d\geq 1$.

\begin{theorem}\label{thm:bounded}
For any fixed $d\geq 1$, there exists a polynomial-time $O(k \cdot \log^2 n)$-approximation for bounded \dAPplus.
\end{theorem}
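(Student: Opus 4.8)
The plan is to rerun verbatim the seven-step algorithm of this Section, replacing every planar construct by its $d$-dimensional analog, and to check that the analysis of Lemma~\ref{lem:bounded_Ei} survives with the exponent $\tfrac12$ replaced everywhere by $\tfrac1d$. The crucial---and initially surprising---observation is that the two competing contributions to the per-edge expected stretch both scale as $\Delta^{1/d}$, so that the dimension dependence cancels and the final guarantee remains $O(k\log^2 n)$, independent of $d$.

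Concretely, I would make the following substitutions. In the spreading LP the constraint exponent $1+\tfrac12$ in~(\ref{eq:2}) becomes $1+\tfrac1d$, reflecting the isoperimetry of $\Z^d$; correspondingly the Rotter--Vygen ordering of Lemma~\ref{lem:ordering} is used in its $d$-dimensional form, producing a bijection $q$ with $\sum_{\{u,v\}\in E} w_{uv}\,|q(u)-q(v)|^{1/d} = O(\log|V|)\sum_{\{u,v\}\in E} w_{uv}\, d(u,v)$. Steps~1, 3, 6 (the \cite{CKR} clustering, the dyadic sub-cluster partition, and the random cyclic shift) are purely combinatorial and transfer unchanged. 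In Step~4 the grid becomes $\{1,\ldots,n^{1/d}\}^d$ and the regions $Y_{i,j}$ are chosen to minimize the $L_\infty$-radius about $f_0(t_i)$; the same volume/averaging argument shows that at every scale the ratio of unassigned to assigned points can be kept $\Omega(1/k)$, whence a region holding $\Theta(q_i(v))$ points fits inside a $d$-cube of volume $O(k\,q_i(v))$ and thus $\diam(Y_{i,j}) = O\big((k\,q_i(v))^{1/d}\big)$ --- the exact $d$-dimensional analog of the $O(\sqrt{k\,q_i(v)})$ bound used in Lemma~\ref{lem:bounded_Ei}.

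With these in hand the per-edge computation repeats almost verbatim. Fix $u,v\in C_i$ with $q_i(v)=q_i(u)+\Delta$. Conditioned on the cross-sub-cluster event ${\cal E}$, whose probability is still $O(\Delta/q_i(v))$, the contribution is
\[
O\big((k\,q_i(v))^{1/d}\big)\cdot O\Big(\tfrac{\Delta}{q_i(v)}\Big)
= O\Big(k^{1/d}\,\tfrac{\Delta}{q_i(v)^{1-1/d}}\Big)
= O\big(k^{1/d}\,\Delta^{1/d}\big),
\]
using $q_i(v)\ge\Delta$; conditioned on $\lnot{\cal E}$ the $d$-dimensional version of Lemma~\ref{lem:Rij_curve}---giving $\sum_\ell \|b_{i,j}^{-1}(\ell)-b_{i,j}^{-1}(\ell+\Delta)\|_2 = O(k\,|Y_{i,j}|\,\Delta^{1/d})$---yields $O(k\,\Delta^{1/d})$. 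The two terms add to $O(k\,\Delta^{1/d}) = O\big(k\,|q_i(u)-q_i(v)|^{1/d}\big)$, which is precisely the bound driving Lemma~\ref{lem:bounded_Ei}. Summing over $E_i$ and invoking the $d$-dimensional ordering guarantee gives $O(k\log n)$ per cluster; the $O(\log k)\le O(\log n)$ loss from the \cite{CKR} step in Step~1 then yields the claimed $O(k\log^2 n)$ approximation, exactly as in Theorem~\ref{thm:2d-bounded}.

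The only genuinely new work lies in the $d$-dimensional analog of Lemma~\ref{lem:Rij_curve}, and this is where I expect the main difficulty. The region $Y_{i,j}$ is an arbitrary, possibly disconnected subset of the $d$-cube with only the packing guarantee that it occupies an $\Omega(1/k)$ fraction of any enclosing cube, so one cannot simply restrict a Hilbert or $Z$-order curve. Instead I would build the ordering by a recursive subdivision of the enclosing $d$-cube into $2^d$ congruent sub-cubes, ordering the sub-cubes by a fixed space-filling traversal and recursing inside each; the packing bound is what lets one charge the irregular boundary between occupied and empty sub-cubes, and it is the source of the $O(k)$ factor. Verifying that this recursion produces the uniform scale-$\Delta$ behavior $O(k\,|Y_{i,j}|\,\Delta^{1/d})$ at \emph{every} $\Delta$ simultaneously---rather than only in an amortized or single-scale sense---is the technical crux, and is carried out in Section~\ref{sec:curve}.
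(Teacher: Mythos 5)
Your proposal takes essentially the same route as the paper: the paper's own proof of Theorem~\ref{thm:bounded} is simply the assertion that the seven-step algorithm and the analysis of Lemma~\ref{lem:bounded_Ei} generalize verbatim with every exponent $\tfrac12$ replaced by $\tfrac1d$, which is exactly what you spell out (quad-tree becomes $2^d$-ary subdivision in Lemma~\ref{lem:curve_general}, region diameters become $O((k\,q_i(v))^{1/d})$, the ordering guarantee becomes $|q(u)-q(v)|^{1/d}$). Your key observation---that the $O(k)$ density factor in the $d$-dimensional analog of Lemma~\ref{lem:Rij_curve} is dimension-independent while the $\Delta^{1/d}$ dependence cancels against the LP ordering guarantee, leaving $O(k\log^2 n)$ for every $d$---is precisely the content of the paper's remark following the theorem on why the bounded ratio, unlike the unbounded one, does not improve with $d$.
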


We remark that the approximation ratio in Theorem \ref{thm:bounded} does not improve with the dimension, as in the unbounded case.
This is due to the lack of space-filling curves for the sets considered by the algorithm, which forces us to use the slightly weaker bounds given by Lemma~\ref{lem:Rij_curve}.
Our proof of Lemma~\ref{lem:Rij_curve} immediately generalizes to any fixed dimension $d\geq 1$, but the resulting bound does not improve for larger $d$.
Improving Lemma~\ref{lem:Rij_curve} for higher values of the dimension $d$ would directly improve our approximation ratio for \dAPplus.
We leave this as an interesting open problem.

\subsection{Finding a good ordering}
\label{sec:curve}

\begin{lemma}\label{lem:curve_general}
Let $h \geq 0$ be an integer,
and let $X = \mathbb{Z}_{2^h}^2$.
Let $Y\subseteq X$, with $|Y|=\eps\cdot |X|$, for some $\eps > 0$.
Then, there exists a bijection $b:Y \to \{0,\ldots,|Y|-1\}$, such that for any $\Delta \geq 1$, 
\[
\sum_{\ell=0}^{|Y|-\Delta-1} \|b^{-1}(\ell) - b^{-1}(\ell + \Delta)\|_2 = O\left(\frac{\sqrt{\Delta}}{\eps} \cdot |Y|\right).
\]
Moreover, given $X$, the bijection $b$ can be computed in polynomial time.
\end{lemma}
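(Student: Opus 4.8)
The plan is to build the ordering $b$ from a Hilbert space-filling curve on $X=\mathbb{Z}_{2^h}^2$, and then to control the non-uniformity of $Y$ by a telescoping argument rather than a pointwise one. Let $\sigma:X\to\{0,\dots,|X|-1\}$ be the discrete Hilbert curve on the $2^h\times 2^h$ grid, which is computable in polynomial time, and define $b$ to list the points of $Y$ in increasing order of their Hilbert index $\sigma$. The single property I would invoke is the standard worst-case locality of the Hilbert curve: there is an absolute constant $C$ (independent of $h$ and of $\eps$) so that any two grid points $x,x'$ with $|\sigma(x)-\sigma(x')|=D$ satisfy $\|x-x'\|_2\le C\sqrt{D}$. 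This holds for \emph{all} pairs, not merely on average, because an interval of $D$ consecutive Hilbert positions is contained in at most two \emph{spatially adjacent} dyadic cells of side $O(\sqrt{D})$; I would flag that the analogous statement fails for the Morton/Z-order curve, so using the Hilbert curve specifically is essential.

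With $b$ fixed, write $s_\ell=\sigma(b^{-1}(\ell))$ for the Hilbert index of the $\ell$-th point of $Y$, so that $s_0<s_1<\dots<s_{|Y|-1}$, and set $D_\ell=s_{\ell+\Delta}-s_\ell>0$. By locality, $\|b^{-1}(\ell)-b^{-1}(\ell+\Delta)\|_2\le C\sqrt{D_\ell}$, so it suffices to bound $\sum_\ell\sqrt{D_\ell}$. The crucial observation is that the gaps telescope:
\[
\sum_{\ell=0}^{|Y|-1-\Delta} D_\ell \;=\; \sum_{\ell=|Y|-\Delta}^{|Y|-1} s_\ell \;-\; \sum_{\ell=0}^{\Delta-1} s_\ell \;\le\; \Delta\cdot |X| \;=\; \frac{\Delta}{\eps}\,|Y|,
\]
since each of the $\Delta$ surviving terms is at most $|X|-1$ and $|X|=|Y|/\eps$. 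The point is that this total bound is oblivious to how $Y$ is distributed along the curve: individual gaps $D_\ell$ may be as large as $|X|$ where $Y$ is locally sparse, but their sum is always controlled.

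Finally, I would convert this into a bound on $\sum_\ell\sqrt{D_\ell}$ by concavity of the square root (equivalently Cauchy--Schwarz): with at most $|Y|$ terms,
\[
\sum_{\ell=0}^{|Y|-1-\Delta} \sqrt{D_\ell} \;\le\; \sqrt{\,|Y|\cdot\sum_\ell D_\ell\,} \;\le\; \sqrt{\,|Y|\cdot\tfrac{\Delta}{\eps}\,|Y|\,} \;=\; \frac{\sqrt{\Delta}}{\sqrt{\eps}}\,|Y| \;\le\; \frac{\sqrt{\Delta}}{\eps}\,|Y|,
\]
where the last step uses $\eps\le 1$. Multiplying by $C$ gives the claim, and since computing the Hilbert index and sorting $Y$ are polynomial-time, so is $b$. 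I note in passing that the argument in fact delivers the stronger bound $O(\sqrt{\Delta}\,\eps^{-1/2}|Y|)$; the stated bound follows a fortiori.

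The step I expect to be the main obstacle is the one in the middle paragraph: taming the non-uniform local density of $Y$. A pointwise bound on each consecutive Euclidean distance is hopeless, since a sparse stretch of $Y$ forces a large jump on the curve and hence one large distance; the whole argument hinges on trading these few large gaps against many small ones through the telescoping identity and the concavity of $\sqrt{\cdot}$. The only other place needing care is justifying the worst-case (not merely average-case) locality constant of the Hilbert curve and verifying it does not depend on $h$ or $\eps$; with that in hand, the remaining computation is routine.
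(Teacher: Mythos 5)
Your proof is correct, and it takes a genuinely different route from the paper's. The paper also orders $Y$ along a recursive space-filling traversal, but it uses the in-order traversal of the quad-tree on $X$ --- i.e.\ a Morton/Z-order curve, exactly the curve you observe has no worst-case locality --- and compensates with a per-level charging argument: each dyadic cell is visited contiguously, so for each depth $i$ at most $4^i\cdot\Delta$ points of $Y$ can have their $\Delta$-th successor escape their depth-$i$ cell, each such escape costs at most the cell diameter $\sqrt{2}\cdot 2^{h-i}$, and summing the resulting geometric series over $i\leq h-\lfloor\log\sqrt{\Delta}\rfloor$ gives $O(|X|\sqrt{\Delta})=O(\sqrt{\Delta}\,|Y|/\eps)$. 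You instead use the strictly stronger adjacency property of the Hilbert curve (consecutive dyadic cells in the traversal share an edge, which is what justifies the worst-case locality bound you invoke), and replace the level-by-level counting with the telescoping identity for $\sum_\ell D_\ell$ plus Cauchy--Schwarz. Both arguments are complete and give polynomial-time constructions, but yours buys a quantitatively better bound, $O(\sqrt{\Delta}\,|Y|/\sqrt{\eps})$ instead of $O(\sqrt{\Delta}\,|Y|/\eps)$, and this is not cosmetic: Lemma~\ref{lem:Rij_curve} invokes the present lemma with $1/\eps=O(k)$, so your bound turns its right-hand side into $O(\sqrt{k}\cdot|Y_{i,j}|\cdot\sqrt{\Delta})$, and propagated through the proof of Lemma~\ref{lem:bounded_Ei} (whose other terms are already $O(\sqrt{k}\cdot\sqrt{\Delta})$) this would improve Theorem~\ref{thm:2d-bounded} from $O(k\log^2 n)$ to $O(\sqrt{k}\log^2 n)$. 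Moreover, replacing Cauchy--Schwarz by H\"older and using a $d$-dimensional Hilbert curve yields $O(\Delta^{1/d}\,|Y|/\eps^{1/d})$, a bound that improves with the dimension --- precisely the strengthening of Lemma~\ref{lem:Rij_curve} that the paper poses as an open problem after Theorem~\ref{thm:bounded}. What the paper's argument buys in exchange is that it needs only contiguity of the dyadic cells, not adjacency, so it applies to any quad-tree traversal order, at the price of the weaker $1/\eps$ dependence.
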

\begin{proof}
Let $T=(V,E)$ be the standard quad-tree on $X$.
Formally, every $v\in V$ corresponds to a square $S_v\subseteq X$.
We consider $T$ as being rooted at a vertex $r$, with $S_r=X$.
Every vertex $v\in T$ has four children $v_1,\ldots,v_4\in V$, such that $S_v = \bigcup_{i=1}^4 S_{v_i}$, and such that each $S_{v_i}$ is a square of side length half the side length of $S_v$.
For every leaf $v\in T$, the square $S_v$ contains a single point from $X$.
For simplicity of notation, we will identify $v$ with the single point in $S_v$.

Let $b$ be defined by the ordering of $Y$ obtained by an in-order traversal of $T$.
More precisely, for a leaf $v\in Y$,  we set $b(v)=i-1$, if $v$ is the $i$-th leaf in $Y$ visited in the in-order traversal of $T$.

Let $i\geq 0$, and let $v\in V$ be a vertex at depth $i$, where we consider the root as being at depth $0$.
We have $|S_v| = |X|/4^i$.
Let $S_v' \subset S_v$ contain all but the $\Delta$ points in $S_v$, that are visited last in the in-order traversal of $T$.
For any $u\in S_v'$, we have that the leaf of $T$ that is visited $\Delta$ steps after $u$ in the in-order traversal, is also in $S_v$, and therefore $\|u-b(b^{-1}(u)+\Delta)\|_2 < \sqrt{2} \cdot 2^h \cdot 2^{-i}$.

It follows that for any $i\in \{0,\ldots,h-\lfloor \log \sqrt{\Delta}\rfloor\}$, there are at most $4^i \cdot \Delta$ points $w\in Y$, with $\|w-b(b^{-1}(w)+\Delta)\|_2 \geq \sqrt{2} \cdot 2^h \cdot 2^{-i}$.
For all remaining points $w\in Y$, we have $\|w-b(b^{-1}(w)+\Delta)\|_2 \leq \sqrt{\Delta}$.
Therefore,
\begin{align*}
\sum_{\ell=0}^{|Y|-\Delta-1} \|b^{-1}(\ell) - b^{-1}(\ell + \Delta)\|_2 &= O(|Y| \cdot \sqrt{\Delta}) + \sum_{w\in Y : b(w) \leq |X|-\Delta-1} \|w - b(b^{-1}(w) + \Delta)\|_2\\
 &= O(|Y| \cdot \sqrt{\Delta}) + \sum_{i=0}^{h-\lfloor \log\sqrt{\Delta}\rfloor} \Delta \cdot 4^i \cdot \sqrt{2} \cdot 2^{h-i} \\
 &= O(|Y| \cdot \sqrt{\Delta}) + O(|X| \cdot \sqrt{\Delta})\\
 &= O(|X| \cdot \sqrt{\Delta})\\
 &= O(|Y| \cdot \sqrt{\Delta} / \eps),
\end{align*}
as required.
\end{proof}

\begin{proof}[Proof of lemma \ref{lem:Rij_curve}]
By the choice of $Y_{i,j}$, there exists a square $X \subseteq \mathbb{Z}^2$, such that $|X|/|Y_{i,j}| = O(k)$.
The assertion follows by setting $b:Y_{i,j} \to \{0,\ldots,|Y_{i,j}|-1\}$ to be the bijection given by Lemma~\ref{lem:curve_general}, for $Y=Y_{i,j}$.
\end{proof}



\subsection*{Acknowledgments}

We thank Bruce Shepherd and Fritz Eisenbrand for their generous
hospitality, and Jens Vygen for kindly posing this problem.

{\small
\bibliographystyle{alpha}
\bibliography{bibfile}

\newcommand{\etalchar}[1]{$^{#1}$}
\begin{thebibliography}{EGK{\etalchar{+}}10}

\bibitem[ARV09]{ARV}
Sanjeev Arora, Satish Rao, and Umesh~V. Vazirani.
\newblock Expander flows, geometric embeddings and graph partitioning.
\newblock {\em J. ACM}, 56(2), 2009.

\bibitem[CKR04]{CKR}
Gruia C{\u{a}}linescu, Howard~J. Karloff, and Yuval Rabani.
\newblock Approximation algorithms for the 0-extension problem.
\newblock {\em SIAM J. Comput.}, 34(2):358--372, 2004.

\bibitem[CMM07]{CMM}
Moses Charikar, Konstantin Makarychev, and Yury Makarychev.
\newblock A divide and conquer algorithm for {\it d}-dimensional arrangement.
\newblock In {\em SODA}, pages 541--546, 2007.

\bibitem[EGK{\etalchar{+}}10]{EGKRTT}
Matthias Englert, Anupam Gupta, Robert Krauthgamer, Harald R{\"a}cke, Inbal
  Talgam-Cohen, and Kunal Talwar.
\newblock Vertex sparsifiers: New results from old techniques.
\newblock In {\em APPROX-RANDOM}, pages 152--165, 2010.

\bibitem[ENRS00]{ENRS}
Guy Even, Joseph Naor, Satish Rao, and Baruch Schieber.
\newblock Divide-and-conquer approximation algorithms via spreading metrics.
\newblock {\em J. ACM}, 47(4):585--616, 2000.

\bibitem[FL07]{FL}
Uriel Feige and James~R. Lee.
\newblock An improved approximation ratio for the minimum linear arrangement
  problem.
\newblock {\em Inf. Process. Lett.}, 101(1):26--29, 2007.

\bibitem[FRT04]{FRT}
Jittat Fakcharoenphol, Satish Rao, and Kunal Talwar.
\newblock A tight bound on approximating arbitrary metrics by tree metrics.
\newblock {\em J. Comput. Syst. Sci.}, 69(3):485--497, 2004.

\bibitem[Han89]{Hansen89}
Mark~D. Hansen.
\newblock Approximation algorithms for geometric embeddings in the plane with
  applications to parallel processing problems (extended abstract).
\newblock In {\em FOCS}, pages 604--609, 1989.

\bibitem[HKRV11]{HeldKRV11}
Stephan Held, Bernhard Korte, Dieter Rautenbach, and Jens Vygen.
\newblock Combinatorial optimization in vlsi design.
\newblock In Vasek Chv{\'a}tal, editor, {\em Combinatorial Optimization -
  Methods and Applications}, volume~31 of {\em NATO Science for Peace and
  Security Series - D: Information and Communication Security}, pages 33--96.
  IOS Press, 2011.

\bibitem[LR99]{LR99}
Frank~Thomson Leighton and Satish Rao.
\newblock Multicommodity max-flow min-cut theorems and their use in designing
  approximation algorithms.
\newblock {\em J. ACM}, 46(6):787--832, 1999.

\bibitem[RR04]{RR}
Satish Rao and Andr{\'e}a~W. Richa.
\newblock New approximation techniques for some linear ordering problems.
\newblock {\em SIAM J. Comput.}, 34(2):388--404, 2004.

\bibitem[RV13]{RV}
Daniel Rotter and Jens Vygen.
\newblock $d$-{D}imensional arrangement revisited.
\newblock {\em Inf. Process. Lett.}, 113(13):498--505, 2013.

\end{thebibliography}
}

\appendix

\section{Integrality gaps}
\label{sec:gaps}

\subsection{Bounded \dAPplus}

\begin{theorem}\label{thm:gap_bounded}
The integrality gap of the spreading LP for the bounded version of \twoAPplus~is $\Omega(\sqrt{k})$.
\end{theorem}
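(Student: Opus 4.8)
The plan is to exhibit a single family of instances on which the spreading LP~(\ref{eq:1}) has value $O(k)$ while \emph{every} integral layout costs $\Omega(k^{3/2})$; dividing gives the claimed $\Omega(\sqrt{k})$ gap. The instance is built around a solid block of terminals. Fix the $k$ terminals to fill a $\sqrt{k}\times\sqrt{k}$ axis-parallel block $R$ of the target grid, so that $f_0$ is a bijection from $T$ onto the cells of $R$. For each terminal $t_i$ introduce one partner non-terminal $u_i$ together with a single unit-weight edge $\{t_i,u_i\}$, and fill the remaining cells with isolated (degree-$0$) non-terminals, choosing the grid side $\sqrt{n}\geq 2\sqrt{k}$ so that there is room outside $R$. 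The matching edges are the only edges that carry any cost.

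For the LP solution I would realize the metric inside $\R^3$: place each terminal at $(f_0(t_i),0)$ and its partner directly above it at $(f_0(t_i),1)$, then perturb the partners outward by $O(1)$ and scatter the isolated vertices far away, letting $d$ be the induced Euclidean metric. Constraint~(\ref{eq:3}) holds because the terminals lie at height $0$ at exactly their prescribed positions, and every matching edge has length $O(1)$, so the objective is $O(k)$. The step I expect to be the main obstacle is checking the spreading constraints~(\ref{eq:2}) for this \emph{fully dense} block, where there is very little slack. The binding case is $S$ equal to the $2k$ terminal/partner points with the basepoint $u$ at the center of $R$: since the partners project onto the same unit-density square as the terminals, the left-hand side is at least $2\sum_{p\in R}\|p-\mathrm{center}\|_2\approx 0.76\,k^{3/2}$, which only just exceeds the required $(2k-1)^{3/2}/4\approx 0.71\,k^{3/2}$. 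I would use the outward perturbation to convert this small slack into a robust margin, and then observe that all other subsets are easier: sub-blocks give the same ratio by scaling, thinner or lower-dimensional $S$ have strictly more slack, and including any of the far-away isolated vertices only increases the left-hand side.

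The integral lower bound is the easy direction and needs only that the terminals are fixed. In any feasible layout the terminals occupy \emph{all} of $R$, so each partner $u_i$ is forced onto a cell outside $R$; hence the edge $\{t_i,u_i\}$ has length at least $\mathrm{dist}(f_0(t_i),\Z^2\setminus R)$. Summing the distance-to-boundary over all cells of a $\sqrt{k}\times\sqrt{k}$ square gives $\sum_i \mathrm{dist}(f_0(t_i),\Z^2\setminus R)=\Theta(k^{3/2})$, a bound that holds for every embedding. Thus the integral optimum is $\Omega(k^{3/2})$ against an LP optimum of $O(k)$, yielding the integrality gap $\Omega(\sqrt{k})$. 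Equivalently, in the contrapositive ``collision'' form used in the technique outline, any layout of cost $o(k^{3/2})$ would have to pack the partners into $R$ and so collide one of them with a terminal, which is impossible.
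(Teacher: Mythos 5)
Your construction does prove the theorem, but it is a genuinely different route from the paper's, and a far more delicate one. The paper's proof is almost trivial to verify: take the complete graph on $n$ vertices with $k=n-2$ terminals, let $f_0$ fill the \emph{entire} grid $\Z_{\sqrt{n}}\times\Z_{\sqrt{n}}$ except the two opposite corners $(0,0)$ and $(\sqrt{n}-1,\sqrt{n}-1)$, and join the two non-terminals $u,v$ by the single unit-weight edge. The LP solution places $u,v$ at distance $1$ from each other and at distance $\sqrt{2n}$ (the grid diameter) from every terminal; since only two points sit ``off to the side'' at diameter distance, every spreading constraint involving them has huge slack and the ones among terminals hold because the terminals sit at genuine grid positions. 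Thus the LP value is $1$, while any integral layout must put $u,v$ into the two corner holes, at cost $\sqrt{2n}=\Omega(\sqrt{k})$. Your instance instead stacks one partner above each of $k$ terminals in $\R^{3}$ --- which is essentially the ``two parallel planes'' technique that the paper describes in its overview and actually deploys for the \emph{unbounded} gap --- and forces all $k$ partners out of a solid $\sqrt{k}\times\sqrt{k}$ block, paying $\Theta(k^{3/2})$ integrally against $O(k)$ fractionally. What your version buys is generality: it exhibits the gap for instances with $k\ll n$ (the paper's instance needs almost all vertices to be terminals), and it structurally mirrors the paper's hardness reduction (connected vertices forced out of a filled region). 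What it costs is that the spreading constraints~(\ref{eq:2}) must be verified at density $2$, where the margin is only a small constant factor.

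On that verification, one concrete correction: the binding case is \emph{not} the full block with basepoint at the center, and your claim that ``all other subsets are easier'' is false as stated. At planar density $2$, the subsets $S$ minimizing $\sum_{v\in S}d(u,v)$ at fixed cardinality are (by rearrangement) \emph{disk-shaped} neighborhoods of the basepoint: a disk of $m$ points gives
\begin{align*}
\sum_{v\in S}d(u,v)\;\approx\;\frac{2}{3\sqrt{2\pi}}\,m^{3/2}\;\approx\;0.266\,m^{3/2},
\end{align*}
against the required $0.25\,m^{3/2}$ --- a slack of roughly $6\%$, tighter than the $8\%$ you compute for the square. Fortunately this is still positive, so your construction survives, but the proof must be run on disks, not blocks. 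Also, the $O(1)$ outward perturbation cannot ``convert the small slack into a robust margin'': it changes each spreading sum by $O(|S|)$, which is lower order than the $|S|^{3/2}$ scale of the constraint, so it does no asymptotic work; the construction stands or falls on the bare constant $\frac{2}{3\sqrt{2\pi}}$ versus $\frac14$ computation.
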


\begin{proof}
Let $G=(V,E)$ be the complete graph on $n$ vertices.
We assume w.l.o.g.~that $\sqrt{n}$ is an integer.
Let $T\subset V$ be the set of terminals, with $|T|=k=n-2$.
Let 
\[
A = (\mathbb{Z}_{\sqrt{n}} \times \mathbb{Z}_{\sqrt{n}}) \setminus \{(0,0), (\sqrt{n}-1, \sqrt{n}-1)\}.
\]
We set the function $f_0:T \to \mathbb{Z}_{\sqrt{n}} \times \mathbb{Z}_{\sqrt{n}}$ to be an arbitrary bijection from $T$ onto $A$.

Let $\{u,v\} = V \setminus T$ be the non-terminal vertices.
We define the weight $w_e$ of any edge $e\in E$ to be
\[
w_e = \left\{\begin{array}{ll}
1 & \text{ if } e = \{u,v\}\\
0 & \text{ otherwise}
\end{array}\right.
\]

We now define a metric $d$ on $V$, that is a feasible solution to the LP.
For any $x,y\in V$, we set
\[
d(x,y) = \left\{\begin{array}{ll}
\|f_0(x) - f_0(y)\|_2 & \text{ if } x,y\in T \\
\sqrt{2 n} & \text{ if } |\{x,y\} \cap  T| = 1\\
1 & \text{ if } x,y \in V \setminus T
\end{array}\right.
\]
It is immediate to check that $d$ defines a metric (i.e.,~it is symmetric, and satisfies the triangle inequality), and that it satisfies all spreading constraints.
Therefore, $d$ is a feasible solution to the LP.
The cost of $d$ is 
\begin{align*}
c_{\mathsf{LP}} &= \sum_{\{x,y\} \in E} w_{xy} \cdot d(x,y) = w_{uv} \cdot d(u,v) = 1.
\end{align*}

On the other hand, any integral solution has to map $u$, and $v$ into the points $(0,0)$, and $(\sqrt{n} - 1, \sqrt{n} - 1)$, since these are the only remaining empty slots in the grid.
Therefore, any integral solution $f$ satisfies $\|f(u)-f(v)\|_2 = \sqrt{2 n}$.
This implies that the cost of $f$ is
\begin{align*}
c_f &= \sum_{\{x,y\} \in E} w_{xy} \|f(x) - f(y)\|_2 = \sqrt{2n}.
\end{align*}
Therefore, the integrality gap is at least $c_f / c_{\mathsf{LP}} = \Omega(\sqrt{k})$, as required.
\end{proof}

The above construction immediately generalizes to arbitrary fixed dimension $d\geq 1$, giving the following result.

\begin{theorem}
For any fixed $d\geq 1$, the integrality gap of the spreading LP for the bounded version of \dAPplus~is $\Omega(k^{1/d})$.
\end{theorem}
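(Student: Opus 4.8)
The plan is to mimic the planar construction of Theorem~\ref{thm:gap_bounded} verbatim, replacing the square grid by the $d$-dimensional mesh. I would take $G$ to be the complete graph on $n$ vertices, assume $n^{1/d}$ is an integer, and use the target grid $\{0,\ldots,n^{1/d}-1\}^d$. Let $A$ be this grid with the two opposite corners $\mathbf{0}=(0,\ldots,0)$ and $\mathbf{1}=(n^{1/d}-1,\ldots,n^{1/d}-1)$ deleted, set $k=n-2$, and let $f_0:T\to A$ be an arbitrary bijection of the terminals onto $A$. The two non-terminals $u,v$ carry the only edge of nonzero (unit) weight. I then define the fractional metric $d$ by $d(x,y)=\|f_0(x)-f_0(y)\|_2$ when $x,y\in T$; $d(x,y)=\mathrm{diam}:=\sqrt{d}\,(n^{1/d}-1)$, the $\ell_2$-diameter of the grid, when exactly one of $x,y$ is a terminal; and $d(u,v)=1$ for the two non-terminals. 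The LP cost is then exactly $w_{uv}\,d(u,v)=1$, just as in the $d=2$ case.

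The verification that $d$ is a feasible LP solution splits into two routine facts and one that is the crux. Symmetry is immediate, and the triangle inequality holds because every distance lies in $[1,\mathrm{diam}]$ while any triangle containing a non-terminal has two sides equal to $\mathrm{diam}$, so no side can exceed the sum of the other two. For the spreading constraints $\sum_{v\in S} d(u,v)\geq (|S|-1)^{1+1/d}/c$ I would distinguish two cases. When $u$ and the points of $S$ all lie in $T$, the distances are genuine $\ell_2$-distances among distinct grid points, so the bound follows from the $d$-dimensional lattice isoperimetry, exactly as this constraint is justified for integral grid embeddings in the first place. When $S$ contains a non-terminal, it has at least $|S|-2$ terminals, each at distance $\mathrm{diam}=\Theta(n^{1/d})$ from the non-terminal $u$; since $(|S|-1)^{1/d}\leq n^{1/d}$ for $|S|\leq n$, the left side is $\Omega(|S|\cdot n^{1/d})$ and comfortably dominates $(|S|-1)^{1+1/d}/c$.

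The main (and essentially only) obstacle is pinning down the constants in this last estimate, especially for small $|S|$ where the $|S|-2$ versus $|S|-1$ discrepancy matters. This poses no real difficulty: a single terminal already contributes $\Theta(n^{1/d})$, so a constant-size $S$ is handled directly, and for large $|S|$ the linear-in-$|S|$ lower bound dominates the $(|S|-1)^{1+1/d}$ term outright given that $\mathrm{diam}\geq n^{1/d}\geq(|S|-1)^{1/d}$.

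Finally, for the integral side, any embedding must place each terminal at its fixed position $f_0(t)\in A$, so the only two grid cells left for $u$ and $v$ are the deleted corners $\mathbf{0}$ and $\mathbf{1}$. Hence $\|f(u)-f(v)\|_2=\mathrm{diam}=\sqrt{d}\,(n^{1/d}-1)$ and the integral cost is $\Theta(n^{1/d})$. Dividing by the LP cost of $1$ and using $k=n-2$ with $d$ fixed gives an integrality gap of $\Omega(k^{1/d})$, as claimed.
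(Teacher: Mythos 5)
Your proposal is correct and is essentially the paper's own argument: the paper proves the $d=2$ case with exactly this construction (the grid minus two opposite corners, two non-terminals joined by the only unit-weight edge, fractional distance $1$ between them, distance equal to the grid diameter from each non-terminal to every terminal), and then simply asserts that the construction ``immediately generalizes'' to any fixed $d\geq 1$, which is precisely what you carry out. One small nit: your spreading-constraint case split should also explicitly cover the case of a terminal center $u$ with a set $S$ containing a non-terminal, but this follows by the same reasoning you already use (the at most two non-terminals in $S$ each contribute $\mathrm{diam}\geq (|S|-1)^{1/d}$ on top of the genuine lattice distances from $u$ to the terminals of $S$), so it is not a genuine gap.
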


\subsection{Unbounded \dAPplus}

\begin{theorem}\label{thm:gap_bounded}
The integrality gap of the spreading LP for the unbounded version of \twoAPplus~is $\Omega(k^{1/4})$.
\end{theorem}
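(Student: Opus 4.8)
The plan is to exhibit a single hard instance of unbounded \twoAPplus on which the spreading LP is cheap but every integral arrangement is expensive, using the two-planes template from the overview. Take the complete graph on $n=2k$ vertices. Designate $k$ of them as terminals (the \emph{blue} points) and fix them, via $f_0$, onto the cells of a $\sqrt{k}\times\sqrt{k}$ axis-parallel grid $B\subset\mathbb{Z}^2$; the remaining $k$ vertices are non-terminals (the \emph{red} points). Put a perfect matching that pairs each red $\rho_i$ with a distinct blue $b_i$, give each matching edge weight $1$, and set every other edge weight to $0$, so the objective only charges the matching.

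For the fractional solution I would realize $d$ in $\mathbb{R}^3$: place each blue at $(f_0(b_i),0)$ and each red nominally at $(f_0(b_i),r)$, i.e.\ directly ``above'' its partner at a small constant height $r=\Theta(1)$, and let $d$ be the induced Euclidean metric. Constraint~(\ref{eq:3}) then holds on $T=B$ by construction, and each matching edge has $d$-length exactly $r$, so $Z^*=\Theta(kr)=\Theta(k)$. The first thing to check is feasibility, i.e.\ that this ``doubled grid'' satisfies every spreading constraint~(\ref{eq:2}). The binding cases are a ball $S$ centered at an interior point at every scale: for large $\rho$ the left side is $\approx 2\cdot\frac{2\pi}{3}\rho^3$ while the right side is $\approx (2\pi\rho^2)^{3/2}/4$, a ratio bounded below by a fixed constant $>1$; and the tightest small case $|S|=2$ forces precisely $r=\Omega(1)$, so the height cannot be driven to $0$. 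To be safe I would invoke the perturbation step, nudging each red within its plane by $O(r)$ so the two layers are slightly staggered, which restores~(\ref{eq:2}) with room to spare while keeping every matching edge of length $\Theta(r)=\Theta(1)$; even a constant-factor sparsification of the red layer would preserve the $\Theta(k)$ LP value if needed.

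For the integral lower bound I use the collision/evacuation argument. In any integral $f$ the blues are pinned to $B$, so $B$ is \emph{completely full}; every red must therefore be mapped to a lattice point outside $B$. Since $\rho_i$ is matched to $b_i$, its edge contributes $\|f(\rho_i)-f_0(b_i)\|$, and because $f_0(b_i)$ lies at depth $\mathrm{dep}(b_i)$ inside $B$ while $f(\rho_i)$ lies outside, the straight segment between them crosses $\partial B$ and the edge costs at least $\mathrm{dep}(b_i)$. Summing over the matching, the integral cost is at least $\sum_i \mathrm{dep}(b_i)=\Theta(k^{3/2})$, since the average depth in a $\sqrt{k}\times\sqrt{k}$ grid is $\Theta(\sqrt{k})$. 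Equivalently, in the ``collision'' phrasing: if the arrangement were much cheaper, a constant fraction of reds would land within $o(\sqrt{k})$ of their partners, forcing more than $\pi\rho^2$ points into a radius-$\rho$ disk holding only $\sim\pi\rho^2$ lattice cells -- a contradiction with injectivity. Dividing the integral bound by $Z^*=\Theta(k)$ yields a gap of $\Omega(\sqrt{k})$, which in particular establishes the claimed $\Omega(k^{1/4})$.

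The step I expect to be the main obstacle is the feasibility check: the doubled configuration sits essentially at the boundary of the spreading inequalities, so one must argue carefully that the $O(r)$ perturbation of the red layer simultaneously (i)~restores \emph{all} constraints~(\ref{eq:2}), including the worst-case centered balls at every scale, and (ii)~leaves every matching edge of length $\Theta(1)$ so that $Z^*$ stays $\Theta(k)$. Turning the ``some red must collide with some blue'' implication into a quantitative statement -- rather than relying only on the global cost bound -- is the technical heart; the depth/evacuation estimate above is the clean way I would package it, and it is robust enough that any constant-factor slack lost in enforcing spreading still leaves a polynomial gap.
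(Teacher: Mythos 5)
Your proposal is correct in its essentials, and it takes a genuinely different route from the paper's --- in fact it proves a stronger statement. The paper does \emph{not} fill a solid block: it fixes terminals on a $3\sqrt{n}\times 3\sqrt{n}$ grid minus a sparse set of holes at spacing $t$, puts the weight-$1$ matching between \emph{pairs of non-terminals} (whose holes are $t$ apart), and adds weak weight-$\alpha$ ``anchor'' edges from each non-terminal to a terminal adjacent to its hole; the integral analysis is a two-case argument (non-terminals either squeeze into holes, paying $t$ per matched pair, or flee the grid, paying $\alpha\cdot\Omega(\sqrt{n})$ per anchor edge), and balancing $t=n^{1/4}$, $\alpha=n^{-1/4}$ yields only $\Omega(k^{1/4})$. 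Your construction short-circuits all of this by matching each free vertex \emph{directly to a pinned terminal} inside a solid block, so the escape route that forces the paper's anchor machinery simply does not exist: the depth argument then gives integral cost $\Omega(k^{3/2})$ against LP cost $O(k)$, i.e.\ a gap of $\Omega(\sqrt{k})$, which matches the paper's $O(\sqrt{k}\log n)$ upper bound up to the logarithm and of course implies the stated $\Omega(k^{1/4})$.

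The caveat, which you correctly identify as the crux but resolve with the wrong tool, is feasibility of constraint~(\ref{eq:2}). Your fractional solution doubles the point density inside the block, and a density-$\lambda$ planar configuration has nearest-$m$ distance sums $\approx \frac{2}{3\sqrt{\pi\lambda}}\,m^{3/2}$; for $\lambda=2$ this is $\frac{2}{3\sqrt{2\pi}}\approx 0.266$, versus the required $\frac14$. So the doubled grid squeaks by on a $6\%$ margin that exists only because the paper chose the constant $\frac14$ (with constant $\frac13$, say, your doubled construction is infeasible while the paper's remains valid, since the paper adds only $n/t^2$ non-terminals --- a vanishing density perturbation). Moreover, the in-plane staggering does \emph{not} ``restore~(\ref{eq:2}) with room to spare'': it leaves the asymptotic density at $2$ and only helps the $|S|=2$ constraint; at large scales nothing but that raw $6\%$ margin saves you, and intermediate scales still need a (routine but unavoidable) discrete verification. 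The robust version of your argument is the sparsification you mention only in passing: match a small constant fraction $\delta$ of terminals, spread uniformly through the block, so the density is $1+\delta$. Then~(\ref{eq:2}) holds with a genuine margin for any valid spreading constant, the LP cost is $\Theta(\delta k)$, the integral cost is still $\sum \mathrm{dep}(b_i)=\Theta(\delta k^{3/2})$, and the $\Omega(\sqrt{k})$ gap survives. One further trade-off worth noting: your instance is purely an LP artifact (it exploits slack in the spreading constant and encodes no computational obstruction), so unlike the paper's template it cannot be recycled into the $3$Partition-based hardness reduction, which is why the paper's gap and its $\Omega(k^{1/4-\eps})$ hardness share one construction.
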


\begin{proof}
Let $G=(V,E)$ be the complete graph on $9n$ vertices.
Let $t>0$ be a positive integer parameter to be specified later.
We assume w.l.o.g.~that $\sqrt{n}$ is an integer multiple of $2t$.
Let $T\subset V$, be the set of terminals, with $|T| = k = 9n - n/t^2$.
Let
\[
B = t \cdot \mathbb{Z}^2 \cap \{\sqrt{n}, \ldots, 2\sqrt{n} - 1\}^2,
\]
i.e.,~$B$ contains all points of the 2-dimensional integer lattice that are contained inside a grid of size $\sqrt{n}\times \sqrt{n}$, and have coordinates that are integer multiples of $t$.
Let 
\[
A = \left(\mathbb{Z}_{3\sqrt{n}} \times \mathbb{Z}_{3\sqrt{n}}\right) \setminus B.
\]
Noting that $|A|=|T|$, let $f_0:T \to X$ be an arbitrary bijection.

We next define the edge weights of $G$.
Let $g:V\setminus T \to B$ be an arbitrary bijection.
We define a matching $\mu$ on the set of non-terminals as follows.
Consider some non-terminal vertex $v\in V\setminus T$, such that $g(v) = (it,2jt)$, for some integers $i,j$.
Let $v'\in V \setminus T$ be such that 
$g(v') = (it, (2j+1)t)$.
The vertex $v$ is matched with $v'$, and we write
$\mu(v)=v'$, and $\mu(v')=v$.
We set $w_{vv'} = 1$.

We define a mapping $\sigma:V\setminus T \to T$ as follows.
For every non-terminal $u \in V\setminus T$, with $g(u)=(it,jt)$, let 
\[
\sigma(u) = f_0^{-1}(it,jt+1).
\]
We say that $\sigma(u)$ is the \emph{anchor} of $u$.
We set $w_{u\sigma(u)} = \alpha$, for some parameter $\alpha>0$ to be specified later.
Finally, we set all other edge weights to be $0$.
This completes the definition of the instance of \dAPplus.

\textbf{The fractional solution:}
We now obtain a fractional solution $d$ with small cost.
For every $x,y\in T$, we set $d(x,y) = \|f_0(x)-f_0(y)\|_2$.
For every $u\in V \setminus T$, we set 
\[
d(u, \sigma(u)) = t,
\]
and
\[
d(u,\mu(u)) = 1.
\]
For every terminal $v\in T$, and non-terminal $u\in V \setminus T$, we set
\[
d(v,u) = d(v, \sigma(u)) + d(\sigma(u), u).
\]
Finally, for every $u\neq u'\in V \setminus T$, with $\mu(u) \neq \mu(u')$, we set
\[
d(u,u') = d(u, \sigma(u)) + d(\sigma(u), \sigma(u')) + d(u', \sigma(u')).
\]

\textbf{Metric constraints:}
We first argue that $d$ is indeed a metric.
It is immediate that $d$ is symmetric, so it suffices to show that it satisfies the triangle inequality.
To that end, it is enough to show that there exists a graph $G'=(V,E')$ with shortest-path metric $d$.
The graph $G'$ contains a clique on $T$, where the length of every edge $\{x,y\}$ is set to $d(x,y)$.
We also add for every $v\in V \setminus T$, an edge $\{v,\sigma(v)\}$ in $E'$ with length $t$, and an edge $\{v,\mu(v)\}$ with length $1$.
One can now check that the shortest-path metric of the resulting graph $G'$ is precisely $d$, which establishes that $d$ is a metric.

\textbf{Spreading constraints:}
We now show that $d$ satisfies the spreading constraints.
Let $S\subset V$, $u\in V$.
Assume first that $u\in T$.
For every $v\in S\setminus T$, we extent $f_0$ to $v$ as follows.
Let $v' = \sigma(v) \in T$ be the anchor of $v$.
Suppose that $f_0(v')=(i,j)$.
We set $f_0(v) = (i,j+1)$.
It is immediate to check that for any $z\in T$, $\|f_0(z)-f_0(v)\|_2 \leq d(z,v)$.
After extending in ths same way $f_0$ to $S$, we have
\begin{align*}
\sum_{v\in S} d(u,v) &\geq \sum_{v\in S} \|f_0(u)-f_0(v)\|_2 \geq (|S|-1)^{1+1/2}/4,
\end{align*}
where the last inequality follows by the fact that $f_0(S)\subset \mathbb{Z}^2$.

It remains to consider the case $u\in S\setminus T$.
We again extend $f_0$ to $S$ as above.
For all $v\in S \setminus \{\mu(u)\}$, we have $\|f_0(u)-f_0(v)\|_2 \leq d(u,v)$.
Therefore, if $\mu(u) \notin S$, then, arguing as above, we are done.
However, if $\mu(u)\in S$, then $\|f_0(u)-f_0(\mu(u))\|_2 = d(u,\mu(u)) + t - 1$.
If $S = \{u, \mu(u)\}$, then the spreading constrain is trivially satisfied.
Therefore, we may assume that there exists $w\in S \setminus \{u,\mu(u)\}$.
It follows by construction that $\|f_0(u)-f_0(w)\|_2 \leq d(u,w) - t + 1$.
Putting everything together, we deduce
\begin{align*}
\sum_{v\in S} d(u,v) &= d(u,w) + d(u,\mu(u)) + \sum_{v\in S \setminus\{w,\mu(u)\}} d(u,v)\\
 &\geq \sum_{v\in S} \|f_0(u)-f_0(v)\|_2 \geq (|S|-1)^{1+1/2}/4,
\end{align*}
as required.
This establishes that the spreading constraints are satisfied.

\textbf{Fractional cost:}
In the fractional solution $d$ we pay $1$ for every pair of matched non-terminals, and we pay $t \cdot \alpha$ for every edge between a terminal and its anchor.
Therefore, the total fractional cost is 
\[
c_{\mathsf{LP}} = O((1+t \cdot \alpha) \cdot n/t^2).
\]

\textbf{Integral cost:}
Consider now an integral solution $f:V \to \mathbb{Z}^2$, extending $f_0$. Suppose first that at least half of the non-terminals are mapped outside the grid $\mathbb{Z}_{3\sqrt{n}} \times \mathbb{Z}_{3\sqrt{n}}$. 
Let $X\subseteq U \setminus T$ be the set of these non-terminals.
It follows that for every $v\in U$, we have $\|f(v)-f(\sigma(v))\|_2 = \Omega(\sqrt{n})$.
Therefore, in this case the cost of the integral solution is at least $\Omega( \alpha \cdot n^{1/2} \cdot n/t^2)$.

Next, suppose that at least half of the non-terminals are mapped inside the grid. Then, for at least a constant fraction of the vertices $v\in U$, we have $\|f(v)-f(\mu(v))\|_2 \geq t$. Therefore, in this case the cost of the integral solution is at least $\Omega(t \cdot n/t^2)$.

We conclude that the cost of an optimal integral solution satisfies
\[
c_{\mathsf{OPT}} = \Omega( (a\cdot n^{1/2} + t) \cdot n/t^2).
\]
Setting $t=n^{1/4}$, and $\alpha=n^{-1/4}$, we obtain that the integrality gap is at least $c_{\mathsf{OPT}} / c_{\mathsf{LP}} = \Omega(k^{1/4})$, concluding the proof.
\end{proof}

The above gap construction generalizes to arbitrary fixed dimension $d\geq 1$, giving the following result.

\begin{theorem}
For any fixed $d\geq 1$, the integrality gap of the spreading LP for the unbounded version of \dAPplus~is $\Omega(k^{1/(2d)})$.
\end{theorem}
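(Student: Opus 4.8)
The plan is to lift the planar ($d=2$) construction given just above to $\mathbb{Z}^d$ essentially verbatim, replacing the planar grid by its $d$-dimensional analogue and the exponent $1+1/2$ in the spreading constraint by $1+1/d$. Concretely, I take $G$ to be the complete graph on $(3m)^d$ vertices for a size parameter $m$, and place the terminals, via an arbitrary bijection $f_0$, onto $A = \mathbb{Z}_{3m}^d \setminus B$, where $B = t\,\mathbb{Z}^d \cap \{m,\ldots,2m-1\}^d$ is the set of $(m/t)^d$ evenly spaced ``holes'' in the central subcube and $t\geq 1$ is a parameter. The $(m/t)^d = |B|$ non-terminals correspond to the holes under a bijection $g$. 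I pair the holes into a perfect matching $\mu$ (matching a hole to its neighbor displaced by $t$ along the first coordinate, grouping even/odd layers) and give each matched edge weight $1$; to each non-terminal $u$ with $g(u)=p$ I assign an anchor $\sigma(u) = f_0^{-1}(p+e)$ for a fixed unit coordinate vector $e$, and give each anchor edge weight $\alpha$. All other edge weights are $0$. Note $k = (3m)^d - (m/t)^d = \Theta(m^d)$ for the parameter choice $t=\sqrt m$ below.

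The fractional solution is defined exactly as in the planar case: $d(x,y) = \|f_0(x)-f_0(y)\|_2$ for $x,y\in T$, $d(u,\sigma(u)) = t$, $d(u,\mu(u)) = 1$, and all remaining distances are forced by routing through the anchors. The triangle inequality is then automatic, since $d$ is realized as the shortest-path metric of the explicit graph $G'$ consisting of a clique on $T$ (with edge lengths $\|f_0(\cdot)-f_0(\cdot)\|_2$) together with one length-$t$ anchor edge and one length-$1$ matching edge per non-terminal. Summing $1$ over the $\tfrac12 (m/t)^d$ matching edges and $t\alpha$ over the $(m/t)^d$ anchor edges gives $c_{\mathsf{LP}} = O\big((1 + t\alpha)\,(m/t)^d\big)$.

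Verifying the spreading constraints is the step that needs genuine care, so I would do it first. For $S\subseteq V$ and $u\in S$, I exhibit an injective map $\phi\colon S\to\mathbb{Z}^d$ dominated by $d$ (that is, $\|\phi(u)-\phi(z)\|_2\le d(u,z)$ for all $z\in S$) by extending $f_0$: every non-terminal $v\in S\setminus T$ is placed at $f_0(\sigma(v))+e$, one further lattice step beyond its anchor. Injectivity holds because distinct holes give distinct anchors, and domination holds edge-by-edge through the anchor-routing definition of $d$. The $d$-dimensional packing bound then yields $\sum_{z\in S} d(u,z) \ge \sum_{z\in S}\|\phi(u)-\phi(z)\|_2 \ge (|S|-1)^{1+1/d}/4$, since the $|S|-1$ nearest integer points to $\phi(u)$ lie at radius $\Omega((|S|-1)^{1/d})$. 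The one delicate case, exactly as in the planar proof, is $u\in S\setminus T$ with $\mu(u)\in S$: there $\|\phi(u)-\phi(\mu(u))\|_2$ overshoots $d(u,\mu(u))$ by about $t$, and I absorb this deficit using a third element $w\in S\setminus\{u,\mu(u)\}$, for which the construction gives $\|\phi(u)-\phi(w)\|_2 \le d(u,w) - t + 1$; the residual sum is still at least $(|S|-1)^{1+1/d}/4$.

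Finally, for the integral lower bound I reuse the two-case dichotomy. If at least half the non-terminals are mapped outside $\mathbb{Z}_{3m}^d$, then since every anchor lies in the central subcube, at $\ell_\infty$-distance $\ge m$ from the grid boundary, each such anchor edge is stretched to $\Omega(m)$, so the cost is $\Omega(\alpha\, m\,(m/t)^d)$. Otherwise at least half the non-terminals lie inside the grid; since the terminals already occupy every non-hole interior point, these non-terminals must land on distinct holes of $B$, which are pairwise at distance $\ge t$, so a constant fraction of the matching edges have length $\ge t$ and the cost is $\Omega(t\,(m/t)^d)$. Hence $c_{\mathsf{OPT}} = \Omega\big((\alpha m + t)(m/t)^d\big)$. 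Choosing $t = \sqrt{m}$ and $\alpha = 1/\sqrt{m}$ balances $t\alpha = 1$ and $\alpha m = t = \sqrt m$, giving $c_{\mathsf{OPT}}/c_{\mathsf{LP}} = \Omega(\sqrt m) = \Omega(k^{1/(2d)})$, since $m = \Theta(k^{1/d})$. I expect the spreading-constraint verification---getting the exponent $1+1/d$ and absorbing the matched-pair deficit of $t$---to be the main obstacle; every other step is a direct transcription of the $d=2$ argument with the planar grid side replaced by $m$.
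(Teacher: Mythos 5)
Your proposal is correct and takes essentially the same route as the paper: the paper's ``proof'' of this theorem is exactly the assertion that its $d=2$ construction (a $t$-spaced sublattice of holes in the central subcube, unit-weight matching edges, weight-$\alpha$ anchor edges, the anchor-routed metric, the extension-of-$f_0$ argument for the spreading constraints with exponent $1+1/d$, and the two-case integral lower bound) transcribes verbatim to $\mathbb{Z}^d$, and your parameter choice $t=\sqrt{m}$, $\alpha=1/\sqrt{m}$ is precisely the paper's $t=n^{1/4}$, $\alpha=n^{-1/4}$ under $m=\sqrt{n}$. One wrinkle you inherit from the paper's own $d=2$ argument: placing a non-terminal $v$ at $f_0(\sigma(v))+e = g(v)+2e$ collides (for $t\geq 3$) with the terminal already occupying that lattice point, so the extended map is only $2$-to-$1$ rather than injective; this is repaired by instead placing each non-terminal at its own hole $g(v)$ (which no terminal occupies), after which domination still holds since $\|g(v)-f_0(z)\|_2 \leq 1+\|f_0(\sigma(v))-f_0(z)\|_2 \leq d(v,z)$, and the matched-pair deficit of $t-1$ is absorbed exactly as you describe.
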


\section{Hardness}
\label{sec:hardness}

\subsection{Bounded \dAPplus}
\label{sec:bounded-hardness}
Recall that in the bounded case, we need to output a map $f: V \to
\Z_{\sqrt{n}} \times \Z_{\sqrt{n}}$. The hardness for this problem is
the following:

\begin{theorem}
  \label{thm:bdd-hard}
  For any constant $\eps > 0$, the bounded \twoAPplus~problem is NP-hard
  to approximate better than a factor of $\Omega(k^{1/2 - \eps})$.
\end{theorem}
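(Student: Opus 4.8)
The plan is to reduce from the strongly NP-hard \emph{3-Partition} problem, encoding its packing structure into the fixed-terminal geometry exactly as suggested by the bounded integrality-gap construction (Theorem~\ref{thm:gap_bounded}). Recall that a 3-Partition instance gives $3m$ integers $a_1,\dots,a_{3m}$ with $B/4 < a_i < B/2$ and $\sum_i a_i = mB$, and asks whether they can be split into $m$ triples each summing to exactly $B$; by strong NP-hardness we may assume $B, a_i \le \poly(m)$. First I would fix a grid of side $L=\sqrt n$ (with $n=L^2$ a large polynomial in $m$ to be tuned), and place $m$ pairwise-disjoint \emph{bins} $R_1,\dots,R_m$ inside it, where each $R_b$ is a $\sqrt B\times\sqrt B$ block of \emph{free} cells and the bins sit at the nodes of a $\sqrt m\times\sqrt m$ sub-grid so that any two bins are at distance $D=\Theta(L/\sqrt m)=\Theta(\sqrt{n/m})$. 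Every remaining grid cell is occupied by a terminal all of whose incident edges have weight $0$; thus the $mB$ non-terminals are forced to occupy exactly the $mB$ free cells, and the only remaining freedom is which non-terminal lands in which bin cell.

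Next I would build one gadget per item: for item $i$, introduce $a_i$ non-terminals joined by a weight-$1$ path (any connected weight-$1$ graph on them works). If all $a_i$ vertices of an item lie inside a single bin, that gadget contributes at most $(a_i-1)\cdot\diam(R_b)=O(a_i\sqrt B)$, so an arrangement that keeps every item whole costs at most $O(mB^{3/2})=:\cost_{\mathrm{yes}}$. Conversely, since each gadget is connected, if an item is spread over two different bins then one of its path edges has its endpoints in distinct bins and hence has length at least $D$, so \emph{any} arrangement that splits some item costs at least $D$. The combinatorial heart is the equivalence: because the non-terminals exactly fill the $m$ capacity-$B$ bins and $\sum_i a_i = mB$, an arrangement keeping every item whole is precisely an assignment of whole items to bins filling each to capacity $B$, which (using $B/4<a_i<B/2$ to force three items per bin) is exactly a 3-Partition solution. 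Hence a YES instance admits an arrangement of cost $\le \cost_{\mathrm{yes}}$, while a NO instance forces a split and therefore cost $\ge D$.

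The resulting gap is $D/\cost_{\mathrm{yes}}=\Theta\!\big(\sqrt{n/m}\,/\,(mB^{3/2})\big)=\sqrt n\,/\,\poly(m)$, and since $k=n-mB=\Theta(n)$ this is a gap of $\sqrt k$ up to a $\poly(m)$ factor. The final step is to absorb that overhead: fixing $\eps>0$, I would choose $n=m^{C}$ for a constant $C=C(\eps)$ large enough that $\poly(m)\le n^{\eps}$, which makes the gap at least $n^{1/2-\eps}=\Omega(k^{1/2-\eps})$ while keeping the reduction of size $\poly(m)$, hence polynomial. The step I expect to be the main obstacle is not the counting equivalence but verifying the two cost bounds robustly --- in particular arguing that in a NO instance no clever re-routing of the non-terminals can avoid a long edge (this is exactly where the ``non-terminals exactly fill the bins'' property together with the connectivity of each gadget must be used carefully), and tuning the bin placement so that the \emph{minimum} pairwise inter-bin distance is genuinely $\Theta(\sqrt{n/m})$ and is not shortened by a pair of bins that happen to lie close together.
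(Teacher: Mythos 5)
Your proposal is correct and follows essentially the same route as the paper's own proof: a reduction from strongly NP-hard 3-Partition in which a grid of zero-weight fixed terminals contains $m$ far-apart capacity-$B$ holes, each item becomes a connected unit-weight gadget (you use paths, the paper uses stars), YES instances cost $O(mB^{3/2})$ while NO instances force an inter-hole edge of polynomially larger length, and a large-polynomial choice of the grid side absorbs the $\poly(m)$ overhead to yield the $\Omega(k^{1/2-\eps})$ gap. The exact-filling argument you spell out (total non-terminal count equals total hole capacity, plus $B/4<a_i<B/2$ forcing three items per bin) is the same counting the paper relies on implicitly.
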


\begin{proof}
  We reduce from the $3$Partition problem: an instance consists of $3n$
  integers $a_1, a_2, \ldots, a_{3n}$, such that each number $a_i \in
  (B/4, B/2)$, and $\sum_i a_i = n\cdot B$. The goal is to decide if
  there exists a partition of the $a_i$s into $n$ triples, each of which
  sum to $B$. The problem is known to be strongly NP-hard even when $B$
  is polynomial in $n$; say $B = n^c$ for some constant $c$. (We can
  assume that $c$ is a large constant.)

  \begin{figure}[ht]
   \centering
   \includegraphics[scale=0.5]{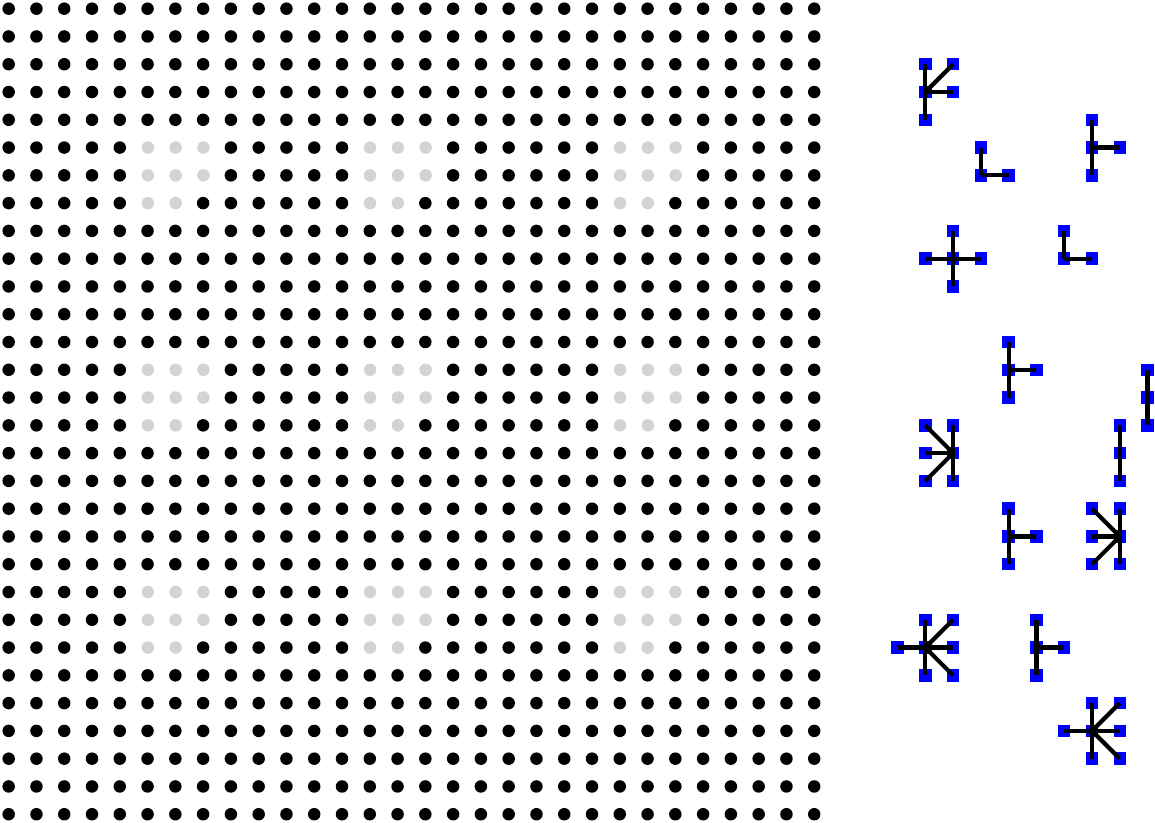}
   \caption{Hardness instance with nine holes.}
   \label{fig:hard1}
  \end{figure}

  Consider a graph $G$ with $3n+1$ connected components. The first $3n$
  components are stars $S_1, S_2, \ldots, S_{3n}$ of size $a_1, a_2,
  \ldots, a_{3n}$; these edges within the stars have unit weight. The
  last component has size $N^2 - nB$, where $N = n^{c'}$ for $c' \gg c$;
  it does not matter what the edges within this component are, since
  they have weight zero. All vertices in this component are terminals,
  which have to be mapped onto a region of size $N \times N$ in the form
  of a grid, from which some vertices have been deleted. In particular,
  choose a ``net'' of $n$ points from this $N \times N$ grid which are
  all at distance at least $n^{c' - 1}$ from each other, and also from
  the boundary of the grid.  Around each of these net points, remove
  exactly $B$ vertices from among the closest vertices. All vertices of
  the stars $\{C_i\}_i$ must map into these holes. All the star edges
  have unit weight, all the edges in the grid have zero weight.
  Figure~\ref{fig:hard1} shows an example where the
  grid is fixed, with nine holes, and there are a collection of stars
  which have to be mapped into these holes.

  If there is indeed a valid $3$Partition, there are $\sum_i (a_i-1) =
  n(B-3)$ edges in total in all these stars, each of which is stretched
  to at most $O(\sqrt{B})$, giving a total cost of $O(nB^{3/2}) = O(n^{3c/2 +
    1})$. If there is no valid $3$Partition, at least one edge must be
  stretched to a distance of at least $n^{c' - 1} - O(\sqrt{B}) =
  \Omega(n^{c' - 1})$. If we set $c' \geq 2c/\eps$, then the gap is
  $N^{1-\eps}$ versus $N^{\eps}$.  Using the fact that $k = \Theta(N^2)$
  gives us the claimed hardness.
\end{proof}

The above hardness reduction directly generalizes to arbitrary fixed dimension, giving the following.

\begin{theorem}
  For any constant $d\geq 1$, and for any constant $\eps > 0$, the bounded \dAPplus~problem is NP-hard
  to approximate better than a factor of $\Omega(k^{1/d - \eps})$.
\end{theorem}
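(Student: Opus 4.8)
The plan is to lift the two-dimensional reduction of Theorem~\ref{thm:bdd-hard} to $d$ dimensions essentially verbatim, again reducing from the strongly NP-hard $3$Partition problem with $a_i \in (B/4, B/2)$, $\sum_i a_i = nB$, and $B = n^c$ for a large constant $c$. First I would build a graph with $3n+1$ components: stars $S_1,\dots,S_{3n}$ of sizes $a_1,\dots,a_{3n}$ with unit-weight edges, together with one zero-weight component of size $N^d - nB$ whose vertices are all terminals, where now $N = n^{c'}$ for $c' \gg c$. The terminals are fixed so as to occupy all but $nB$ cells of the $d$-dimensional mesh $\{1,\dots,N\}^d$; the $nB$ free cells are organized into $n$ \emph{holes} of $B$ cells each, obtained by choosing a net of $n$ points that are pairwise (and from the boundary) at distance at least $n^{c'-1}$ and deleting the $B$ closest lattice points around each net point. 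Since the stars contain exactly $nB$ vertices, every star vertex must land in some hole.

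The analysis then splits into the two usual cases. If a valid $3$Partition exists, I would map each triple of stars summing to $B$ into one hole; since a hole of $B$ cells has Euclidean diameter $O(B^{1/d})$, each of the $\sum_i(a_i-1) = n(B-3)$ star edges is stretched by at most $O(B^{1/d})$, for a total cost of $O(n\, B^{1+1/d}) = O(n^{1 + c(1+1/d)})$, which is polynomially bounded. Conversely, if no valid $3$Partition exists, then the constraint $a_i \in (B/4,B/2)$ forces that no assignment of whole stars to holes of capacity $B$ is feasible, so some connected star must span two distinct holes; the corresponding star edge is then stretched by at least $n^{c'-1} - O(B^{1/d}) = \Omega(n^{c'-1})$, since distinct holes are $\Omega(n^{c'-1})$ apart.

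To conclude, I would use $k = \Theta(N^d)$, hence $N = \Theta(k^{1/d})$, and compare the two costs. Writing everything as a power of $N$ via $n = N^{1/c'}$, the soundness cost is $\Omega(N^{1 - 1/c'})$ while the completeness cost is $O(N^{(1 + c(1+1/d))/c'})$, so the gap is at least $N^{1 - (2 + c(1+1/d))/c'}$. Choosing $c' \geq (2 + c(1+1/d))/\eps$ makes this at least $N^{1-\eps} = \Theta(k^{(1-\eps)/d})$, which is $\Omega(k^{1/d - \eps})$ after rescaling $\eps$. The main obstacle I anticipate is not the gap arithmetic but verifying the geometric feasibility of the $d$-dimensional construction: that one can place $n$ pairwise $n^{c'-1}$-separated net points inside $\{1,\dots,N\}^d$ away from the boundary, and carve out around each a hole of exactly $B$ cells of diameter $O(B^{1/d})$. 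The separation is feasible precisely because $n$ disjoint balls of radius $\Theta(n^{c'-1})$ have total volume $\Theta(n^{1 + d(c'-1)}) \le n^{c'd} = N^d$ whenever $d \ge 1$, and the diameter bound is the standard volume estimate for balls in the integer lattice $\Z^d$; both reduce to routine checks once the parameters are fixed.
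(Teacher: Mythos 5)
Your proposal is correct and follows essentially the same route as the paper, which proves the $d=2$ case via the $3$Partition reduction (stars, a huge grid with $n$ holes of $B$ cells each, widely separated net points) and then simply asserts that the reduction generalizes to arbitrary fixed $d$; you have filled in exactly the $d$-dimensional details that assertion relies on, with the correct stretch bound $O(B^{1/d})$ per edge in the completeness case, the $\Omega(n^{c'-1})$ soundness bound, and the right parameter choice for $c'$. One cosmetic remark: the net-point feasibility is more cleanly seen by placing the points on a sub-grid of spacing $n^{c'-1}$ (giving $n^d \geq n$ candidates) rather than by the ball-volume bound, which by itself is necessary but not sufficient for a packing.
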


\subsection{Unbounded \dAPplus}
\label{sec:unbounded-hardness}

Now to the unbounded case, where we need to merely output a map $f: V
\to \Z \times \Z$. The hardness for this problem is the following:

\begin{theorem}
  \label{thm:unbdd-hard}
  For every $\eps > 0$, the unbounded \twoAPplus~problem is NP-hard to
  approximate better than a factor of $\Omega(k^{1/4 - \eps})$.
\end{theorem}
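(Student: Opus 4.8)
The plan is to reduce from $3$Partition, just as in the bounded case (Theorem~\ref{thm:bdd-hard}), but to obtain the weaker $\Omega(k^{1/4-\eps})$ bound by replacing the fully occupied grid that confined the stars there with the \emph{anchoring} mechanism from the unbounded integrality-gap construction (Appendix~\ref{sec:gaps}). In the unbounded setting an integral solution is free to place the non-terminals anywhere in $\Z^2$, so we can no longer trap the star vertices by simply leaving no empty space; instead each non-terminal will carry a light anchor edge of weight $\alpha$ to a fixed terminal, exactly as the map $\sigma(\cdot)$ did in the gap construction. As there, the instance will be governed by two competing failure modes --- a vertex \emph{escaping} far from its anchor versus a star being \emph{split} across two distant holes --- and it is the balance between these two modes, rather than a single long edge as in the bounded reduction, that caps the attainable gap at $k^{1/4}$ instead of $k^{1/2}$.

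Concretely, from a strongly-NP-hard $3$Partition instance $a_1,\dots,a_{3n}\in(B/4,B/2)$ with $\sum_i a_i = nB$ and $B = n^c$, I would build the following instance. As in the gap construction I take a grid region and designate $n$ pairwise well-separated \emph{bins}, each a tight cluster of $B$ empty \emph{holes} of diameter $O(\sqrt B)$, with distinct bins at distance at least $D$; the remaining lattice points of the region are occupied by weight-$0$ terminals, and beside each hole sits a terminal that will act as an anchor. For each integer $a_i$ I create a star $S_i$ on $a_i$ non-terminals with unit-weight internal edges, and attach to each star vertex an anchor edge of weight $\alpha$, so that a star can be slotted into any single bin at small anchor cost but pays anchor stretch for leaving the region. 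The number of terminals is $k = \Theta(\text{region size})$, the star edges number $\sum_i (a_i-1) = n(B-3)$, and the feasible spreading metric $d$ (unit length on star edges, length $\approx t$ on anchor edges, shortest-path extension otherwise) is defined verbatim as in the gap proof.

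In the YES direction a valid $3$Partition lets us place, in each bin, three stars whose sizes sum to exactly $B$, filling that bin's $B$ holes; every star then lives inside a single bin of diameter $O(\sqrt B)$, so each of its unit edges is stretched to $O(\sqrt B)$ and each anchor edge to $O(1)$, for a total cost of $O(nB^{3/2}) + O(\alpha\cdot nB)$. In the NO direction I would argue by cases on an optimal integral $f$, mirroring the gap analysis: if a constant fraction of the star vertices lie at distance $\Omega(\sqrt n)$ from their anchors then the anchor edges alone cost $\Omega(\alpha\sqrt n\cdot nB)$, while otherwise almost all vertices sit near their anchors inside the region, and since no assignment of the $a_i$ to bins of capacity $B$ is exact, some star must occupy holes in two different bins and thereby stretch one of its unit edges to $\Omega(D)$. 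Choosing $D$, $\alpha$, $B$ and the region side to balance these two lower bounds --- the analogues of the gap's $t = n^{1/4}$, $\alpha = n^{-1/4}$ --- and taking the size exponents large enough to absorb the $O(nB^{3/2})$ and $O(\alpha nB)$ terms into the slack, makes the NO cost exceed the YES cost by a factor $k^{1/4-\eps}$, with $k = \Theta(N^2)$ as before.

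The step I expect to be the main obstacle is precisely this NO-case confinement argument, which has no analogue in the bounded reduction. There a long edge is forced for free because the grid is full; here I must rule out that the solver cheaply evacuates a whole star (preserving its short internal edges) into the empty space outside the region, paying only a controlled amount of anchor stretch, and I must simultaneously establish the combinatorial fact that every near-exact packing of the non-evacuated stars forces a split. The delicate point is calibrating $\alpha$ so that escape is expensive enough to keep the NO cost high, yet cheap enough not to inflate the YES cost --- and it is exactly this two-sided calibration, rather than the clean single stretched edge of the bounded case, that pins the bound at $k^{1/4}$.
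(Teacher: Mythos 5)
Your overall route is the paper's: reduce from $3$Partition via the bounded construction of Theorem~\ref{thm:bdd-hard}, add light ``anchor'' edges to the star vertices to confine them in the unbounded lattice, and accept a degradation to $k^{1/4}$ because the confinement mechanism inflates the number of terminals. But your specific anchoring scheme has a genuine flaw that breaks the YES case. You anchor each star vertex to a terminal sitting beside a specific hole, i.e., inside a specific bin. The reduction must fix these anchors \emph{before} knowing which bins the stars will occupy --- that assignment is exactly the $3$Partition solution being decided. So in a YES instance, the (unknown) valid partition will in general place star $S_i$ in a bin different from the one its anchors sit beside, and its anchor edges then stretch to the inter-bin distance $\Omega(D)$ (up to the region diameter, roughly $\sqrt{n}\,D$), not to $O(1)$ as you claim. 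The YES cost is therefore $O(nB^{3/2}) + O(\alpha nB \cdot \sqrt{n}\,D)$ rather than your $O(nB^{3/2}) + O(\alpha nB)$, and the calibration you sketch collapses. (One can re-balance $\alpha$, $D$, and the padding to absorb this mismatch term and still recover $k^{1/4-\eps}$, but that is a genuinely different calculation from the one you outline, and your stated bounds are false for the construction as described.)

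The paper sidesteps this issue entirely by anchoring \emph{every} star vertex to the single center of the grid, with weight $w = B^{1/2}/N$, and padding the $N \times N$ grid out to an $N' \times N'$ grid of terminals with $N' = N^2 B^2$. With one common anchor, the anchor cost of any legal placement (all star vertices in holes) is placement-independent --- at most $w \cdot nB \cdot N = O(nB^{3/2})$ --- so the YES/NO analysis of the bounded case goes through verbatim. Moreover, escape is not ``balanced'' against splitting, as in your two-failure-modes framing: it is made strictly dominated, since a single escaped vertex already pays $w \cdot N'/2 = \Omega(N B^{5/2})$, which exceeds the total cost of any legal map, so illegal maps never need to be considered. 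The exponent $1/4$ then arises purely from accounting: the padding forces $k \approx (N')^2 = (NB)^4$ while the gap remains $N^{1-\eps}$ versus $N^{\eps}$. Finally, note that your NO-case dichotomy, which conditions on each vertex's distance to its own anchor, conflates escaped vertices with vertices sitting in holes of the wrong bin; the correct dichotomy is on whether some vertex lies outside the (padded) occupied region, and the anchor edges exist only to make that event expensive.
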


\begin{proof}
  We perform the same reduction as in Theorem~\ref{thm:bdd-hard}, but we
  now ``pad'' the $N \times N$ grid to get an even larger grid of
  dimensions $N' \times N'$, with $N' = N^2\cdot B^2$. Each of the
  vertices in each of the stars is now connected to the center of the
  grid with an edge of weight $w$. Recall that the edges of the stars
  all have unit weight. All the vertices in the larger grid are
  terminals. 
  We set $w := B^{1/2}/N$.


  Because of the padding, every mapping that maps the stars into the
  holes is better than any mapping that maps even a single vertex to
  outside the padded grid. Indeed, if we were to map even one vertex of
  one of the stars to outside the big grid (i.e., it is an ``illegal''
  map), when is mapped, we would incur a cost of $w \cdot N'/2 =
  \Omega(N B^{5/2})$ due to the single associated $w$-weight edge. On
  the other hand, the worst possible cost incurred by \emph{any} legal
  map that places all the $nB$ star nodes into the holes is $N\cdot nB +
  w \cdot nB \cdot N = O(nNB)$, which is much smaller --- hence we would
  never consider illegal maps over legal ones.

  Hence it suffices to consider only maps where all vertices are mapped
  into the holes. Since we have an extra cost due to the $w$-weight
  edges, we need to account for this. The maximum cost due to the
  $w$-weight edges for any such ``legal'' map is at most $w\cdot (\sum_i
  a_i) \cdot N/2$, since there are $\sum_i a_i$ nodes, and the holes are
  at distance at most $N/2$ from the center of the grid. This gives us
  $O(w\cdot nB\cdot N)$. Using $w := B^{1/2}/N$, the additional cost due
  to the $w$-edges in this ``legal'' case would be $O(nB^{3/2})$, which
  would not interfere with the hardness reduction of the previous
  section, except changing some constants.

  This allows us to inherit the gap from Theorem~\ref{thm:bdd-hard},
  which is $N^{\eps}$ versus $N^{1-\eps}$, but the number of terminals
  in this padded instance is now $k \approx (NB)^4 = N^{4(1+\eps)}$.
  Hence the hardness is $\Omega(k^{1/4 - \eps'})$, as claimed.
\end{proof}

The above hardness directly generalizes to arbitrary fixed dimension, giving the following.

\begin{theorem}
  For any constant $d\geq 1$, and for any constant $\eps > 0$, the unbounded \dAPplus~problem is NP-hard to
  approximate better than a factor of $\Omega(k^{1/(2d) - \eps})$.
\end{theorem}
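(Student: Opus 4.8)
The plan is to mirror the two-dimensional reduction of Theorem~\ref{thm:unbdd-hard}, replacing the planar grid by its $d$-dimensional analogue and adjusting a single weight parameter to account for the dimension. I would start from the $d$-dimensional bounded hardness instance constructed above (the generalization of Theorem~\ref{thm:bdd-hard}), which encodes a $3$Partition instance with $3n$ stars of sizes $a_i\in(B/4,B/2)$, a host grid $\{1,\ldots,N\}^d$ with $N=n^{c'}$ and $c'\gg c$, and $n$ well-separated holes of $B$ points each. Here a YES instance admits a legal map of cost $O(nB^{1+1/d})$ (each of the $O(nB)$ star edges is packed inside a hole and stretched to at most $O(B^{1/d})$, the side length of a $B$-point cube), whereas a NO instance forces some star edge across distance $\Omega(n^{c'-1})$; choosing $c'$ large yields a gap of $N^{1-\eps}$ versus $N^{\eps}$ with $k=\Theta(N^d)$ terminals.

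To move to the unbounded setting I would pad the host grid to $\{1,\ldots,N'\}^d$ with $N'=N^2B^2$, make every vertex of the larger grid a terminal, and attach to each star vertex a single edge of weight $w$ connecting it to the center of the grid. The crucial dimension-dependent choice is $w:=B^{1/d}/N$ (specializing to $B^{1/2}/N$ when $d=2$). The purpose of these $w$-edges is to penalize any ``illegal'' map that places even one star vertex outside the padded grid: such a map pays at least $w\cdot N'/2=\Omega(NB^{2+1/d})$ on the associated $w$-edge, while any ``legal'' map that routes all $O(nB)$ star vertices into the holes costs at most $O(NnB)$ in total (star edges and $w$-edges together, using $w\le 1$ and the fact that the holes lie within distance $O(N)$ of the centered original grid). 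Since $B=n^{c}$ with $c$ a large constant, $B^{1+1/d}\gg n$, so the illegal cost strictly dominates and it suffices to analyze legal maps.

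Restricted to legal maps, the added cost from the $w$-edges is $O(w\cdot nB\cdot N)=O(nB^{1+1/d})$, which matches the order of the YES-case cost and therefore does not interfere with the inherited gap $N^{1-\eps}$ versus $N^{\eps}$. Finally I would count terminals: the padded grid contributes $k\approx(N')^d=(N^2B^2)^d=(NB)^{2d}$, and choosing $c'$ large enough that $B\le N^{\eps}$ gives $k\le N^{2d(1+\eps)}$, hence $N\ge k^{1/(2d(1+\eps))}$. The gap ratio $N^{1-2\eps}$ then translates to $k^{(1-2\eps)/(2d(1+\eps))}=k^{1/(2d)-\eps'}$, which is the claimed bound after renaming $\eps'$ as $\eps$.

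I expect the main obstacle to be calibrating $w$ (and secondarily $N'$) so that two competing requirements hold simultaneously: $w$ must be large enough that a single misplaced vertex is catastrophic (forcing legality), yet small enough that the aggregate $w$-edge cost incurred by a legal map stays within the $O(nB^{1+1/d})$ budget of the reduction. The single clean choice $w=B^{1/d}/N$ resolves this, and verifying the two governing inequalities, namely $w N'\gg NnB$ and $wN=O(B^{1/d})$, is exactly where the dimension $d$ enters, through the $B^{1/d}$ packing radius of a hole; everything else is a routine transcription of the $d=2$ argument.
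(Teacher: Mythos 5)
Your proposal is correct and follows essentially the same route as the paper: it is a verbatim transcription of the two-dimensional padding reduction of Theorem~\ref{thm:unbdd-hard} to dimension $d$, with the dimension entering only through the hole radius $B^{1/d}$, the weight choice $w = B^{1/d}/N$, and the terminal count $k \approx (N')^d = (NB)^{2d}$. The paper itself merely asserts that the $d=2$ argument directly generalizes, and your write-up supplies exactly that generalization with the correct calibration of the two governing inequalities (illegal maps dominated by $wN'$, legal $w$-edge cost $O(nB^{1+1/d})$ matching the YES-case budget).
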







\section{An Algorithm for (Slightly Violated) Bounded \dAPplus}
\label{sec:violation}

Our algorithm for unbounded \twoAPplus na\"{\i}vely embeds the points
into a region of size at most $\sqrt{kn} \times \sqrt{kn}$, and gives an
$O(\sqrt{k} \log n)$-approximation. We can trade off the two, by giving
an embedding into a region of size at most $\sqrt{(1+\e)n} \times
\sqrt{(1+\e)n}$, and gives an $O(\sqrt{k/\eps} \log n)$-approximation.
We now sketch this embedding.

First, assume $\e \leq \frac12$. Define size thresholds $s_i := \e
\frac{n}{k} (1+\e)^{i}$ for $i \geq 0$. Define size ``buckets''
$\{B_i\}_{i = 1}^{L}$ for $L = O(\log k)$, where $B_0 := (0, s_0]$, and
\[ B_i := ( s_{i-1}, s_i ] . \] As in Section~\ref{sec:rounding}, the
$0$-extension gives us a set of clusters $C_1, C_2, \ldots, C_k$. Let
$k_i$ be the number of these clusters whose sizes fall in the bucket
$B_i$, where $\sum_i k_i = k$. Moreover,
\begin{gather}
  k_0 \cdot s_0 + \sum_{i \geq 1} k_i \cdot s_i \leq k \cdot \e
  \frac{n}{k}  + (1 + \e)n \leq (1+2\e)n, 
\end{gather}
where in the first inequality, the first term follows from $k_0 \leq k$
and the second from the fact that all $k_i$ clusters have sizes in
$(s_{i-1}, s_i]$. An immediate corollary is that
\begin{gather}
  \sum_{i \geq 0} k_i \cdot \ceil{(1 + \e)^i} \leq \sum_{i \geq 0} k_i
  \cdot (1 + (1 + \e)^i) \leq \bigg( 1 + \frac{(1+2\e)}{\e} \bigg) k
  = \frac{(1+3\e)}{\e} k. \label{eq:4}
\end{gather}

The algorithm is now simple: we break up the $\sqrt{(1+4\e)n} \times
\sqrt{(1+4\e)n}$ grid into about $\eps \frac{n}{k}$ sub-squares, each of
size at least $\frac{(1+3\e)}{\e} k$. (The difference between $(1 +
4\e)$ in the size of the grid and $(1+3\e)$ in the size of sub-squares
is to account for the fact that some of the numbers are not perfect
squares; for simplicity of exposition we gloss over these details.)

The points of each such sub-square are labeled, with $k_i \cdot
\floor{(1 + \e)^i}$ of them being labeled~$i$. By~(\ref{eq:4}) we know
that there are enough points and hence such a labeling is possible. Now
for each of the $k_i$ clusters whose size falls in bucket $B_i$, we will
map $\ceil{(1+\e)^i}$ of that cluster's points into each sub-square.
(This is in contrast to the embedding from Section~\ref{sec:rounding}
that creates $n/k$ sub-squares each of size about $k$, and maps one
point from each cluster into each sub-square.) Since there are $\eps
\frac{n}{k}$ sub-squares, and each of them can take $\ceil{(1+\e)^i}$
points from this cluster, we would have allocated enough room for the at
most $s_i = \eps \frac{n}{k} (1+\e)^i$ points in this cluster. Finally,
we use the same arguments as Section~\ref{sec:rounding} to embed each of
the clusters onto the points allocated to it. The total stretch is
$O(\sqrt{k/\e})$, over and above the stretch due to the $0$-extension
and Rotter-Vygen.

The above argument directly generalizes to any fixed dimension $d\geq 1$, giving the following result.

\begin{theorem}
For any fixed $d\geq 1$, and for any $\eps>0$, there exists a polynomial-time, $O((k / \eps)^{1/d} \log n)$-approximation algorithm for unbounded \dAPplus, such that the image of the embedding is contained in $\{1,\ldots, ((1+\eps) n)^{1/d}\}^d$.
\end{theorem}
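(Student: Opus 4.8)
The plan is to carry over verbatim the bucketing-and-sub-cube scheme of Section~\ref{sec:violation}, replacing every two-dimensional quantity by its $d$-dimensional analogue, and to check that the single place where the dimension enters the stretch bound turns the factor $O(\sqrt{k/\eps})$ into $O((k/\eps)^{1/d})$. After rescaling $\eps$ by a constant we may assume $\eps\le\tfrac12$ and that the target grid is $\{1,\ldots,((1+4\eps)n)^{1/d}\}^d$; the final $(1+\eps)$ in the statement is recovered by taking $\eps'=\eps/4$. First I would run the refined pipeline of Section~\ref{sec:slight}: embed the LP metric into an HST via \cite{FRT} (one $O(\log n)$ factor), apply connected $0$-Extension to obtain clusters $C_1,\ldots,C_k$ that are themselves connected subtrees, and read off from each cluster's subtree an ordering $q_i:C_i\to\{0,\ldots,|C_i|-1\}$ satisfying the $d$-dimensional analogue of Lemma~\ref{lem:ordering}, namely $\sum_{uv}w_{uv}|q_i(u)-q_i(v)|^{1/d}=O(\log n)\cdot Z^*$. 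Using the HST-first order avoids the extra $O(\log k)$ from \cite{CKR} and is what lets the final bound read $O((k/\eps)^{1/d}\log n)$ rather than $O((k/\eps)^{1/d}\log k\log n)$.

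The combinatorial heart is dimension-free, so I would reuse it unchanged. Define $s_i:=\eps\frac{n}{k}(1+\eps)^i$ and the buckets $B_i$, let $k_i$ be the number of clusters whose size lies in $B_i$, and observe that the two counting inequalities $\sum_i k_i s_i\le(1+2\eps)n$ and $\sum_i k_i\ceil{(1+\eps)^i}\le\frac{(1+3\eps)}{\eps}k$ hold exactly as before. Next I would tile the grid into $M:=\eps\frac{n}{k}$ axis-aligned sub-cubes, each containing $\Theta(k/\eps)$ points and hence of side length $L=O((k/\eps)^{1/d})$; label the points of each sub-cube so that $k_i\ceil{(1+\eps)^i}$ of them carry label $i$, which is feasible by the second inequality. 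A bucket-$i$ cluster is then allocated $\ceil{(1+\eps)^i}$ label-$i$ points in every sub-cube, giving it $M\ceil{(1+\eps)^i}\ge s_i\ge|C_i|$ slots in total. Finally I would lay each cluster onto its slots by a two-level space-filling curve: arrange the $M$ sub-cubes along an outer space-filling curve on the $M^{1/d}\times\cdots\times M^{1/d}$ grid of sub-cubes, split the ordering $q_i$ into consecutive blocks of $\ceil{(1+\eps)^i}$ points, and send the $\ell$-th block into the $\ell$-th sub-cube, distributing it over that sub-cube's label-$i$ slots by an inner curve.

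The one step that must be redone for general $d$ is the per-edge stretch bound, and this is where I expect the only real work. Fix an intra-cluster edge $uv$ with ordering gap $\Delta=|q_i(u)-q_i(v)|$ and write $p=\ceil{(1+\eps)^i}\ge1$. If $u,v$ fall in the same or adjacent sub-cubes the distance is at most a constant times the sub-cube diameter, namely $O(L)=O((k/\eps)^{1/d})\le O((k/\eps)^{1/d}\Delta^{1/d})$ since $\Delta\ge1$; otherwise their blocks differ by $D=O(\Delta/p)$, the outer curve places the two sub-cubes at Euclidean distance $O(L\cdot D^{1/d})$, and this is $O((k/\eps)^{1/d}(\Delta/p)^{1/d})\le O((k/\eps)^{1/d}\Delta^{1/d})$ because $p\ge1$. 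In either case $\|f(u)-f(v)\|_2=O((k/\eps)^{1/d}\,\Delta^{1/d})$, so summing over the edges of $C_i$ and invoking the $d$-dimensional form of Lemma~\ref{lem:ordering} yields $\sum_{uv\in E_i}w_{uv}\|f(u)-f(v)\|_2=O((k/\eps)^{1/d}\log n)\,Z^*$. Adding up over clusters and charging the cut edges to the connected $0$-Extension exactly as in Section~\ref{sec:rounding} gives the claimed $O((k/\eps)^{1/d}\log n)$-approximation, with the image contained in $\{1,\ldots,((1+\eps)n)^{1/d}\}^d$ after the rescaling of $\eps$. The genuinely new content is only the verification of the two-level curve bound above; the rounding discrepancies (non-perfect $d$-th powers, $\floor{\cdot}$ versus $\ceil{\cdot}$, and the $(1+4\eps)$-versus-$(1+3\eps)$ slack) are absorbed into constants exactly as the body of Section~\ref{sec:violation} does.
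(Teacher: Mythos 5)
Your construction is, in skeleton, the paper's own proof of this theorem (Section~\ref{sec:violation}): the same thresholds $s_i$ and buckets, the same two counting inequalities, the same tiling into $M=\eps n/k$ sub-cubes with $\ceil{(1+\eps)^i}$ label-$i$ slots per sub-cube per bucket-$i$ cluster, and the same reliance on the Section~\ref{sec:slight} pipeline (FRT first, then connected $0$-Extension) to keep the guarantee at $O((k/\eps)^{1/d}\log n)$ rather than picking up an extra $O(\log k)$. Your per-edge bound $\|f(u)-f(v)\|_2=O\bigl((k/\eps)^{1/d}\,\Delta^{1/d}\bigr)$ via the two-level curve is exactly the calculation the paper compresses into ``we use the same arguments as Section~\ref{sec:rounding},'' and it is carried out correctly.

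There is, however, one genuine gap: nowhere in your layout do the terminals end up at their prescribed positions. You send the initial block of \emph{every} cluster to the first sub-cube of one common outer curve, so in general $f(t_i)\neq f_0(t_i)$. This is not a constant-factor issue: the map you build is not a feasible solution of \dAPplus at all (feasibility demands $f|_T=f_0$), and the cut-edge charging you import from Section~\ref{sec:rounding} also breaks, because the terminal--terminal edges created when altering the instance are charged at factor $1$ precisely on the grounds that the final layout places each $t_i$ at $f_0(t_i)$. The paper inherits the missing step from the legalization of Section~\ref{sec:interleaving}: each cluster's layout is positioned so that its terminal lands at an allocated point nearest $f_0(t_i)$, and a local remapping then puts it exactly at $f_0(t_i)$. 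To repair your version: invoke property~(1) of Lemma~\ref{lem:ordering} to force $q_i(t_i)=0$, start cluster $C_i$'s block sequence at the sub-cube containing $f_0(t_i)$ and proceed \emph{cyclically} along the outer curve, choose the labeling inside that sub-cube so that the point $f_0(t_i)$ itself is one of $C_i$'s label-$i$ slots, and place $t_i$ there. Note the cyclic start is not free with an open Hilbert-type curve, since two positions that are cyclically adjacent (one near the end, one near the start) can be geometrically $\Theta(M^{1/d}L)$ apart; you need a closed (Moore-type) space-filling curve, for which a cyclic gap of $D$ still implies distance $O(L\,D^{1/d})$. With that device, the terminal fixing costs only an additive $O(L)=O((k/\eps)^{1/d})$ per edge, which your bound absorbs, and the proof goes through.
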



\end{document}